\newcommand{\N}{{\mathbb{N}}} 
\newcommand{\R}{{\mathbb{R}}} 
\newcommand{\Rd}{{\mathbb{R}^d}} 
\newcommand{\Hb}{{\mathcal{H}}} 
\newcommand{\V}{{\mathcal{V}}} 
\newcommand{\TV}{\prod^{\infty}_{k=0}(\R^d)^{\otimes k}} 
\newcommand{\TVm}{\prod^{m}_{k=0}(\R^d)^{\otimes k}} 
\newcommand{\THm}{\prod^{m}_{k=0}\Hb^{\otimes k}} 
\newcommand{\Xt}{{\mathcal{X}}} 
\newcommand{\M}{{\mathcal{P}(\Xt)}} 
\newcommand{\Pspc}{{\mathscr{P}}} 
\newcommand{\D}{{\mathscr{P}}} 
\newcommand{\Data}{{\mathcal{D}}} 
\newcommand{\Hz}{{\mathcal{H}_p}} 
\newcommand{\Sig}{{\mathbf{S}}} 
\newcommand{\Sigkm}{{k^m_{\text{sig}}}} 
\newcommand{\x}{{\mathbf{x}}} 
\newcommand{\y}{{\mathbf{y}}} 
\newcommand{\z}{{\mathbf{z}}} 
\newcommand{\mmd}{\widehat{\text{MMD}^2}} 
\newtheorem{proposition}{Proposition}
\theoremstyle{definition}
\newtheorem{definition}{Definition}
\newtheorem{remark}{Remark}
\title{Generative modelling of financial time series with structured noise and MMD-based signature learning}
\author[ ]{Lu Chung I\textsuperscript{1}\thanks{Corresponding author: \texttt{lu.chung.i@u.nus.edu}}}
\author[2]{Julian Sester}
\affil[1]{National University of Singapore, 21 Lower Kent Ridge Road, Singapore 119077, \protect\\ Asian Institute of Digital Finance (AIDF)}
\affil[2]{National University of Singapore, 21 Lower Kent Ridge Road, Singapore 119077, \protect\\ Department of Mathematics}
\begin{document}

\maketitle

\begin{abstract}{}
    Generating synthetic financial time series data that accurately reflects real-world market dynamics holds tremendous potential for various applications, including portfolio optimization, risk management, and large scale machine learning. We present an approach that {uses structured noise} for training generative models for financial time series. The expressive power of the signature transform {has been shown to be able} to capture the complex dependencies and temporal structures inherent in financial data {when used to train generative models in the form of a signature kernel }. We employ a moving average model to model the variance of the noise input, enhancing the model's ability to reproduce stylized facts such as volatility clustering. Through empirical experiments on S\&P 500 index data, we demonstrate that our model effectively captures key characteristics of financial time series and outperforms comparable {approaches}. In addition, we explore the application of the synthetic data generated to train a reinforcement learning agent for portfolio management, achieving promising results. Finally, we propose a method to add robustness to the generative model by tweaking the noise input so that the generated sequences can be adjusted to different market environments with minimal data.
\end{abstract}

\section{Introduction} \label{sec:intro}

The availability of high-quality, realistic synthetic data for financial time series holds immense potential for various quantitative finance applications (see e.g. \cite{assefa2020generating}).
However, capturing the complex and nuanced characteristics of financial markets, such as volatility clustering, fat tails, and long-range dependencies, poses a significant challenge for traditional generative models.
Recent advances in deep learning have introduced promising techniques, including Variational Autoencoders (VAEs) introduced in \cite{kingma2013auto} and Generative Adversarial Networks (GANs) pioneered in \cite{goodfellow2014generative}, for generating synthetic financial data.

{The signature of a path has been used effectively as a means to train generative models for financial time series using the maximum mean discrepancy (MMD) as defined in \cite{gretton2012kernel}.
This is done with signature kernels, for example in \cite{issa2023non} by using computational methods proposed in \cite{kiraly2019kernels} and \cite{salvi2021signature}.
The prior works in this area typically use identical and independent Gaussian noise as the key stochastic input to these models.
This paper proposes enhancing the model's ability to represent stylized facts by employing a moving average model to generate structured noise input, leading to more realistic synthetic data generation.}

To demonstrate the effectiveness of our approach, we compare the characteristics of the generated sequence using our MMD approach against {similar approaches using the signature kernel as well as an alternative approach} from \cite{xu2020cot} that uses causal optimal transport (COT), which is a type of Wasserstein distance that respects causality (see \cite{backhoff2022estimating}).
{Signature kernel methods have been applied in \cite{liao2024sig} through the Conditional Sig-WGAN framework, and in \cite{issa2023non}, where neural SDEs were incorporated into the model architecture.}
We find that although all the models are generally able to capture most stylised facts of financial time series, there are some improvements in the statistical aspects of our MMD generated sequence.
Furthermore, we explore the application of the generated synthetic data for training a reinforcement learning agent on the task of portfolio management, achieving promising results.
Finally, we propose a method to add robustness to the generative model by tweaking the noise input so that the generated sequences can be adjusted to different market environments with minimal data.

The sections are organised as follows.
Section \ref{sec:related_works} provides an overview of related works in the field of generative models for financial time series that use neural networks.
Section \ref{sec:preliminaries} details the technical underpinnings behind the methodology.
We describe our proposed generative model and the modelling of the noise variance in Section \ref{sec:generative_model}.
Section \ref{sec:experiments} presents the experimental results, comparing our approach with {the alternatives}, and evaluating the performance of the synthetic data generated for portfolio management using reinforcement learning.
We also introduce a method to add robustness to the generative model in this section. Finally, Section \ref{sec:conclusion} concludes the paper and discusses potential future research directions.

\section{Related Works} \label{sec:related_works}

Generative models are a type of statistical model that can be used to generate new data instances. They are especially useful in the financial domain, where they can be used to generate financial time series data. {Financial time series data is a sequence of values that are either measured at regular intervals e.g. index price series, or in some cases at irregular time intervals e.g. limit order books.} Generative models can be used to generate new time series data that is similar to the original data. This is useful for a variety of applications, including portfolio optimization, risk management, and privacy protection. In addition, generative models can be used to backtest trading strategies or serve as training data for machine learning models. Traditional statistical methods for modelling financial time series include autoregressive models, GARCH models, and stochastic volatility models.
We focus instead on related works that also use neural networks as the backbone for the generative model.

One of the earlier work on financial time series is from \cite{kondratyev2019market} which used restricted Boltzmann machines to generate foreign exchange rates. More recent advances in deep learning include approaches such as the Variational Autoencoders (VAEs) where the generative model is trained to maximise the evidence lower bound (ELBO) of the reference data as introduced in \cite{kingma2013auto}.
In \cite{buehler2020data}, the authors trained VAEs to reproduce the log signatures of the reference data with a diagonal Gaussian distribution for the latent space.
The choice of log signature is motivated by the fact that it is a unique feature transformation that captures the key characteristics of the path.
The challenge lies in inverting the log signature to the original path, which is highly nontrivial (see e.g. \cite{lyons2018inverting}).
In \cite{wiese2021multi} the authors used autoencoders for dimensionality reduction of discrete local volatilities and utilises normalising flows to model the distribution of the latent variables while enforcing no-arbitrage constraints.

Another approach uses diffusion based models (see \cite{ho2020denoising}). In \cite{kong2020diffwave}, this was used to generate audio data in wave form. More recently, in \cite{huang2024generative}, the FTS-Diffusion model was proposed to generate financial time series with irregular and scale-invariant patterns.

An even more popular approach is the framework of Generative Adversarial Networks (GANs) pioneered in \cite{goodfellow2014generative} to train neural networks generators of financial time series.
One of the earliest works using this approach is \cite{takahashi2019modeling} where the authors used multi-layer perceptrons and convolutional networks in the generator and discriminator.
The QuantGAN from \cite{wiese2020quant} focuses on the choice of neural network architecture.
In particular, an architecture called Temporal Convolutional Networks (TCN) was used to model drift and volatility processes in order to generate samples.
\cite{fu2022simulating} further incorporates the attention mechanism with transformers into the network architecture under the GAN framework.
One common issue with the GAN framework is that it is {known to} exhibit instabilities during training.
The authors of QuantGAN report that "training was very irregular and did not converge to a local optimum of the objective function", hence the need for model selection across multiple parameters saved during training.

There are other GAN type approaches such as the TimeGAN from \cite{yoon2019time}, which blends the GAN framework with autoencoders and supervised learning by simultaneously training the autoencoder to create the latent space and the generator and discriminator to model the dynamics of the latent space.
TailGAN from \cite{cont2022tail} focuses on generating time series that preserve the tail risk features of the training data by exploiting the joint elicitability property of value-at-risk and expected shortfall.
In \cite{kidger2021neural}, the authors use neural SDEs (see \cite{kidger2022neural}), with the Wasserstein-1 distance to train the generator similar to the Wasserstein GAN from \cite{arjovsky2017wasserstein}.
This falls under the GAN framework as the Wasserstein-1 distance is calculated using an optimisation based formulation that uses neural networks as function approximators.
The COT-GAN from \cite{xu2020cot}, uses causal Wasserstein or Causal Optimal Transport (COT) (see \cite{backhoff2022estimating}) which adds an additional constraint to respect causality.
Although this particular work was not applied to financial time series, it is an example of a GAN based approach that uses a distance metric as the discriminator exhibiting better stability during training.
More recently, \cite{acciaio2024time} proposed a time-causal VAE (TC-VAE) which uses VAEs and causal Wasserstein distance to train the generator for financial time series.

There are also works that use Wasserstein like distances as the scoring function for training the generator.
In \cite{ni2021sig}, the authors use the universality property of path signatures to replace the space of Lipschitz functions with linear functionals on the signature space in the Kantorovich-Rubinstein dual formulation of the Wasserstein distance.
This switch is critical as it admits an analytical solution to the optimisation problem which means the training of the generator is no longer adversarial and belongs to the class of Score-based Generative Models (SGMs) described in \cite{song2019generative}.
The approach was further extended to the conditional setting with the Conditional Sig-WGAN in \cite{liao2024sig} and in \cite{lozano2023neural}.
In \cite{biagini2024universal} the authors show that randomised signatures (see \cite{compagnoni2023effectiveness}), which is related to the signature transform and has the same expressive power while being finite-dimensional, also have the universality property. Therefore, it can be used to construct the Wasserstein distance as in \cite{ni2021sig} for training the generator.

The closest related work is \cite{issa2023non} where the authors used neural SDEs as the base model and train the generator using a modified version of the MMD with the signature kernel.
Similar to our approach, the authors also use the insight from \cite{kiraly2019kernels} to first lift the path to the canonical feature space of a classical kernel such as the Gaussian kernel.
One key difference to the approach in this paper is the distribution of the noise input to the generative model.
It is typical in generative models to use independently distributed Gaussian noise for each time step.
We use a moving average model to model the variance of the noise which is found to be important in generating autocorrelation in the absolute returns of the generated sequences which is a stylised fact of financial time series.
The use of the moving average model is novel to the best of our knowledge.
{
The additional structure in the noise also allows us to quickly adapt the generative model to produce time series that are better aligned to different market environments by only adjusting the noise distribution.
This adds efficiency by requiring less data to refit the model and is also quick to implement in practice.
The architecture of the neural network is also different to \cite{issa2023non} as we use a recurrent neural network (RNN) based architecture instead of a neural SDE.
Specifically, we use a Long Short-Term Memory (LSTM) network as the backbone of the generator, which provides a structured way of capturing the temporal dependencies in the data.
Finally, we perform direct comparisons between our approach and related works such as the COT-GAN, Conditional Sig-WGAN and the signature kernel trained neural SDEs from \cite{xu2020cot}, \cite{liao2024sig} and \cite{issa2023non} respectively, to demonstrate the effectiveness of our approach.
}

In terms of applying the generative model to downstream tasks, the authors of \cite{coletta2021towards} simulate limit order book data while in \cite{koshiyama2021generative}, the authors create synthetic data for building or {fine tuning} trading strategies.
The strategies are based on data driven machine learning techniques such as neural networks and gradient boosting machines.
We will do a similar portfolio management task using reinforcement learning trained on synthetic data created by the generative model.
In particular, we look to generate synthetic data of longer time horizons which is another aspect that differentiates our approach from the existing works.

\section{Preliminaries} \label{sec:preliminaries}

In the following subsections, we explain the key concepts behind our methodology which are the reproducing kernel Hilbert space (RKHS), maximum mean discrepancy (MMD) and path signatures. Readers who are familiar with these subjects can skip this section.

\subsection{Reproducing Kernel Hilbert Space} \label{sec:RKHS}

\begin{definition}[Kernel] \label{def:kernel}
    Let $\Xt \subseteq \Rd$ be the input space with $d \in \N$, let $\V$ be some inner product space and let $\phi:\Xt \to \V$. Then we call the mapping $k: \Xt \times \Xt \ni (x, x') \mapsto \langle \phi(x), \phi(x') \rangle_\V \in \R$ a kernel.

\end{definition}
The primary advantage of using a kernel is that there is a way to compute the result of the inner product without having to explicitly compute the feature map $\phi(x)$.
This results in efficient computation despite the fact that the inner product space $\V$ may be of high dimension or even infinite dimensional.
This is known as the kernel trick (see \cite[Section 4]{steinwart2008support}).
Every kernel possesses a corresponding unique reproducing kernel Hilbert space (RKHS) (see \cite{muandet2016kernel}).

\begin{definition}[Reproducing kernel Hilbert space (RKHS)] \label{def:rkhs}
    The RKHS $(\Hb,\langle \cdot \; , \; \cdot \rangle_\Hb)$ of a kernel $k: \Xt \times \Xt \to \R$ is a Hilbert space of functions $f: \Xt \to \R$ with the following properties:
    \begin{enumerate}
        \item The kernel function with one fixed input is a function in the RKHS.
        \begin{equation}
            k(x, \cdot) \in \Hb \quad \text{for all} \; x \in \Xt.
        \end{equation}
        \item Reproducing property: the evaluation of a function in the RKHS at a point $x$ is equivalent to the inner product of the function with the kernel function evaluated at $x$.
        \begin{equation}
            \langle f, k(x, \cdot) \rangle_\Hb = f(x) \quad \text{for all} \; f \in \Hb, x \in \Xt.
        \end{equation}
    \end{enumerate}
\end{definition}

We can view the reproducing property as a mapping of each input $x$ to a function $k(x, \cdot)$ in the RKHS.
This feature map is sometimes called the canonical feature map.

A universal kernel is a kernel that corresponds to a RKHS that can approximate any continuous function arbitrarily well with respect to the uniform norm.
An example of a universal kernel is the Gaussian kernel defined by
\begin{equation}
    k(x, y) = \exp \left( -\frac{\Vert x - y \Vert^2}{2l^2} \right)
\end{equation}
where $l>0$ is the length scale parameter.
A universal kernel is also a characteristic kernel\footnote{To be precise, whether universality implies a characteristic kernel depends on the assumptions on the compactness of the input space, $\Xt$ and/or the choice of the kernel. If we assume $\Xt$ is compact then the assertion is true regardless of the choice of kernel. However, if the kernel is radial, such as the Gaussian kernel or rational quadratic kernel, then universality and characteristic are equivalent when $\Xt=\Rd$. Refer to \cite{sriperumbudur2011universality} for a comprehensive exposition.}.
\begin{definition}[Characteristic kernel] \label{def:characteristic_kernel}
    Given the set $\M$ of all probability measures on $\Xt$ equipped with the Borel $\sigma$-algebra, a kernel $k$ is characteristic if the map\footnote{We will use $E_{X \sim P}[\, \cdot \,]$ or $E_P[\, \cdot \,]$ to represent the integral $\int_\Xt \, \cdot \; dP(X)$ for brevity.} $\M \ni P \mapsto \mu_P := \int_\Xt k(X, \cdot) dP(X) = E_{X \sim P}[k(X, \cdot)] \in \Hb$ is injective. $\mu_P$ is called the kernel mean embedding of the probability measure $P$.
\end{definition}
This notion of a characteristic kernel is relevant to the relation between the MMD and the distributions being compared which we will explore next.

\subsection{Kernel Two Sample Test} \label{sec:kernel_two_sample_test}

A kernel two sample test is a non-parametric method based on the maximum mean discrepancy (MMD) to test whether two samples are drawn from the same distribution as introduced in \cite{gretton2012kernel}.
\begin{definition}[Maximum mean discrepancy (MMD)] \label{def:mmd}
    The MMD between two probability measures $P,Q \in \M$ is defined as the supremum of the difference between the expectation of the function evaluated on the two distributions where the supremum is taken over the unit ball of a RKHS $(\Hb,\langle \cdot \; , \; \cdot \rangle_\Hb)$ with respect to the induced norm $\Vert \cdot \Vert_\Hb=\sqrt{\langle \cdot \; , \; \cdot \rangle_\Hb}$.
    \begin{equation*}
        \text{MMD}(P, Q; \Hb) := \underset{f\in \Hb, \Vert f \Vert_\Hb \leq 1}{\sup} E_{X \sim P} [f(X)] - E_{Y \sim Q} [f(Y)]
    \end{equation*}
\end{definition}
The MMD is a distance metric between two probability measures.
Note the similarity with the Kantorovich-Rubinstein dual formulation of the Wasserstein distance, another distance metric over distributions.
If $\Hb$ is the space of all 1-Lipschitz functions then Definition \ref{def:mmd} is equivalent to the Wasserstein-1 or earth mover's distance.
The benefit of using an RKHS over 1-Lipschitz functions is that the supremum can be determined analytically.

\begin{proposition} \label{prop:mmd}
    Let $P,Q \in \M$ be two probability measures with $X \sim P$ and $Y \sim Q$.
    If $(\Hb,\langle \cdot \; , \; \cdot \rangle_\Hb)$ is the RKHS of a kernel $k$ and we assume Bochner integrability
    \begin{equation} \label{eq:bochner_int}
        E_{P}[\sqrt{k(X,X)}] < \infty \  \text{and} \  E_{Q}[\sqrt{k(Y,Y)}] < \infty
    \end{equation}
    then the MMD can be computed as the norm of the difference of the mean embeddings of the two probability measures.
    \begin{equation*}
        \text{MMD}(P, Q; \Hb) = \Vert \mu_{P} - \mu_{Q} \Vert_\Hb
    \end{equation*}
    where $\mu_{P}$ and $\mu_{Q}$ are as defined in Definition \ref{def:characteristic_kernel}.
\end{proposition}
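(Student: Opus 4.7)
The plan is to rewrite the functional $E_P[f(X)] - E_Q[f(Y)]$ as a single inner product in $\Hb$ against $\mu_P - \mu_Q$, after which the supremum over the unit ball follows immediately from Cauchy--Schwarz.

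First I would invoke the reproducing property (second part of Definition \ref{def:rkhs}) to write $f(X) = \langle f, k(X, \cdot)\rangle_\Hb$ for every $f \in \Hb$ and every $X \in \X$, and likewise $f(Y) = \langle f, k(Y, \cdot)\rangle_\Hb$. The next step is the key technical point: pulling the expectation inside the inner product, i.e. justifying
\begin{equation*}
E_{X \sim P}[\langle f, k(X, \cdot)\rangle_\Hb] = \langle f, E_{X \sim P}[k(X, \cdot)]\rangle_\Hb = \langle f, \mu_P\rangle_\Hb,
\end{equation*}
and analogously for $Q$. This is where assumption (\ref{eq:bochner_int}) is used: since $\|k(X, \cdot)\|_\Hb = \sqrt{\langle k(X, \cdot), k(X, \cdot)\rangle_\Hb} = \sqrt{k(X, X)}$ by the reproducing property, the hypothesis $E_P[\sqrt{k(X, X)}] < \infty$ is exactly the Bochner integrability of $X \mapsto k(X, \cdot)$ as an $\Hb$-valued random element. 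This guarantees that the Bochner integral $\mu_P = E_P[k(X, \cdot)] \in \Hb$ is well defined and commutes with every bounded linear functional on $\Hb$, in particular with $\langle f, \cdot\rangle_\Hb$. The same applies to $\mu_Q$.

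Having established $E_P[f(X)] - E_Q[f(Y)] = \langle f, \mu_P - \mu_Q\rangle_\Hb$, the Cauchy--Schwarz inequality yields
\begin{equation*}
\bigl|\langle f, \mu_P - \mu_Q\rangle_\Hb\bigr| \leq \|f\|_\Hb \cdot \|\mu_P - \mu_Q\|_\Hb \leq \|\mu_P - \mu_Q\|_\Hb
\end{equation*}
for every $f$ in the unit ball, giving the upper bound. For the matching lower bound (and hence equality), I would exhibit a maximiser: if $\mu_P \neq \mu_Q$, take $f^* = (\mu_P - \mu_Q)/\|\mu_P - \mu_Q\|_\Hb$, which has unit norm and achieves $\langle f^*, \mu_P - \mu_Q\rangle_\Hb = \|\mu_P - \mu_Q\|_\Hb$; if $\mu_P = \mu_Q$ both sides are zero and there is nothing to show.

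The main obstacle is the inner-product/expectation interchange, which is not a pure Fubini argument but genuinely requires Bochner integrability in the infinite-dimensional Hilbert space $\Hb$. The rest of the argument is essentially an application of Cauchy--Schwarz with an explicit optimiser.
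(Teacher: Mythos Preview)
Your proposal is correct and follows essentially the same route as the paper: reproducing property, Bochner-integrability to swap expectation and inner product, then Cauchy--Schwarz with the explicit unit-norm maximiser $(\mu_P-\mu_Q)/\|\mu_P-\mu_Q\|_\Hb$. If anything, your version is slightly more careful in explaining why $\|k(X,\cdot)\|_\Hb=\sqrt{k(X,X)}$ makes \eqref{eq:bochner_int} exactly the Bochner condition, and in separating out the degenerate case $\mu_P=\mu_Q$.
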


\begin{proof}
    Firstly, by \cite[Lemma 3.1]{RN1023}, $\mu_{P}, \mu_{Q} \in \Hb$. Moreover
    \begin{flalign} \label{eq:mmd}
    \begin{split}
        \text{MMD}(P, Q; \Hb) &= \underset{f\in \Hb, \Vert f \Vert_\Hb \leq 1}{\sup} E_{P} [f(X)] - E_{Q} [f(Y)] \\
        &= \underset{f\in \Hb, \Vert f \Vert_\Hb \leq 1}{\sup} E_{P} [\langle f, k(X, \cdot) \rangle_\Hb] - E_{Q} [\langle f, k(Y, \cdot) \rangle_\Hb].
            \quad \text{(reproducing property)} \\
        \intertext{The expectation and the inner product can be interchanged due to the assumption in equation \eqref{eq:bochner_int}. Hence, we obtain}
        \text{MMD}(P, Q; \Hb) &= \underset{f\in \Hb, \Vert f \Vert_\Hb \leq 1}{\sup} \langle f, E_{P} [k(X, \cdot)] \rangle_\Hb - \langle f, E_{Q} [k(Y, \cdot)] \rangle_\Hb \\
        &= \underset{f\in \Hb, \Vert f \Vert_\Hb \leq 1}{\sup} \langle f, \mu_{P} - \mu_{Q} \rangle_\Hb. \\
        \intertext{By Cauchy-Schwarz inequality,
            $\langle f, \mu_{P} - \mu_{Q} \rangle_\Hb \leq \Vert f \Vert_\Hb \Vert \mu_{P} - \mu_{Q} \Vert_\Hb$ with equality only if $f$ and $\mu_{P} - \mu_{Q}$ are linearly dependent and since $\Vert f \Vert_\Hb \leq 1$, thus with $f := \frac{\mu_{P} - \mu_{Q}}{\Vert \mu_{P} - \mu_{Q} \Vert_\Hb}$, we have}
        \text{MMD}(P, Q; \Hb) &= \Vert \mu_{P} - \mu_{Q} \Vert_\Hb \\
        &= \sqrt{E_{P}~ [k(X,X)] - 2 E_{P,Q}~ [k(X,Y)] + E_{Q}~ [k(Y,Y)]}
    \end{split}
    \end{flalign}
\end{proof}

The Bochner integrability condition in equation \eqref{eq:bochner_int} holds for continuous bounded kernels or for continuous kernels on compact spaces (see \cite{sutherland2019unbiased}).
Equality \eqref{eq:mmd} now implies directly an estimator of the MMD.
\begin{proposition} \label{prop:unbiased_mmd}
    From \cite[Lemma 6]{gretton2012kernel}, let $X:=\{x_1,\ldots,x_m\}$ and $Y:=\{y_1,\ldots,y_m\}$ be samples from $P,Q \in \M$ respectively.
    Then an unbiased estimate of the $\text{MMD}^2$ is given by
    \begin{equation} \label{eq:mmd_est}
        \mmd(X, Y; k) := \underbrace{\frac{1}{m (m-1)} \sum^m_{i=1} \sum^m_{j=1,j\neq i} k(x_i, x_j)}_\text{A} -
            \underbrace{\frac{2}{m^2} \sum^m_{i=1} \sum^m_{j=1} k(x_i, y_j)}_\text{B} +
            \underbrace{\frac{1}{m (m-1)} \sum^m_{i=1} \sum^m_{j=1,j\neq i} k(y_i, y_j)}_\text{C}.
    \end{equation}
\end{proposition}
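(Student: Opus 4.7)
The plan is to show unbiasedness by taking the expectation of the estimator term by term and matching it against the closed-form MMD squared derived at the end of the proof of Proposition \ref{prop:mmd}. Specifically, from equation \eqref{eq:mmd} we already have
\begin{equation*}
    \text{MMD}^2(P, Q; \Hb) = E_{X,X' \sim P}\bigl[k(X, X')\bigr] - 2\, E_{X \sim P, Y \sim Q}\bigl[k(X, Y)\bigr] + E_{Y, Y' \sim Q}\bigl[k(Y, Y')\bigr],
\end{equation*}
where $X, X'$ are i.i.d.\ copies from $P$ and $Y, Y'$ are i.i.d.\ copies from $Q$ (the identity $\|\mu_P\|_\Hb^2 = E_{X, X' \sim P}[k(X, X')]$ follows from the reproducing property together with the fact that the Bochner integrability assumption lets us move the expectation inside the inner product twice).

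First I would handle term $A$. Since $x_1, \ldots, x_m$ are i.i.d.\ samples from $P$, for every pair $(i,j)$ with $i \neq j$ the random variables $x_i$ and $x_j$ are independent, so $E[k(x_i, x_j)] = E_{X, X' \sim P}[k(X, X')]$. Summing the $m(m-1)$ such pairs and dividing by $m(m-1)$ yields $E[A] = E_{X, X' \sim P}[k(X, X')]$. The symmetric argument applied to $y_1, \ldots, y_m$ gives $E[C] = E_{Y, Y' \sim Q}[k(Y, Y')]$. For term $B$, every pair $(x_i, y_j)$ consists of independent samples, one from $P$ and one from $Q$, so $E[k(x_i, y_j)] = E_{X \sim P, Y \sim Q}[k(X, Y)]$; summing over all $m^2$ pairs with the prefactor $2/m^2$ gives $E[B] = 2 E_{X \sim P, Y \sim Q}[k(X, Y)]$. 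Linearity of expectation then combines $E[A] - E[B] + E[C]$ into exactly the closed form above.

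The only subtlety, which is what I would flag as the one point requiring care rather than a genuine obstacle, is the exclusion of the diagonal terms $i = j$ in $A$ and $C$. The pair $(x_i, x_i)$ is not a pair of independent draws from $P$, so $E[k(x_i, x_i)]$ equals $E_{X \sim P}[k(X,X)]$ rather than $E_{X, X' \sim P}[k(X, X')]$; including it would introduce a nonzero bias (unless $k$ is a degenerate kernel). Using $m(m-1)$ rather than $m^2$ in the denominators of $A$ and $C$ is exactly the normalization that compensates for dropping the diagonal, and this is the reason a separate $2/m^2$ normalization is correct for the cross term $B$ where no diagonal needs to be excluded. Bochner integrability from \eqref{eq:bochner_int} ensures all the expectations are finite and interchangeable, so no measure-theoretic issues arise.
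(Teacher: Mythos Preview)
Your argument is correct and is the standard linearity-of-expectation computation for unbiasedness of the U-statistic estimator. The paper itself does not supply a proof of this proposition at all: it simply attributes the result to \cite[Lemma~6]{RN771} and moves on, so there is nothing to compare against. Your write-up would in fact fill a gap the paper leaves open by citation. One small remark: the last line of \eqref{eq:mmd} in the paper writes $E_{P}[k(X,X)]$, which on its face looks like the diagonal expectation; you correctly unpack it as $E_{X,X'\sim P}[k(X,X')]$ via $\Vert\mu_P\Vert_\Hb^2=\langle\mu_P,\mu_P\rangle_\Hb$, and that clarification is worth keeping since it is exactly what makes the exclusion of the $i=j$ terms necessary.
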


An alternative view is to see components A and C in equation \eqref{eq:mmd_est} as calculating the average similarity between samples in the same distribution and B in equation \eqref{eq:mmd_est} as calculating the average similarity between samples from different distributions
    since the kernel function is essentially calculating an inner product.
If the kernel function is characteristic, then the MMD is zero if and only if the two distributions are the same as shown in \cite{gretton2012kernel}.
The two sample kernel test is then to test whether the $\mmd$ is zero.
As the $\mmd$ can be computed using differentiable operations, it can serve as an objective function for gradient based optimisation methods.
For a more detailed exposition, we refer the reader to \cite{muandet2016kernel}.

\subsection{Signature of a Path} \label{sec:signature}

The kernel two sample tests described in Section \ref{sec:kernel_two_sample_test} are typically used for distributions over "static" i.e. non-causal values.
However, we are interested in distributions over paths and to capture the time causalities of these path values.
We can extend the test to distributions over paths by using the $\mmd$ defined in equation \eqref{eq:mmd_est} with $k$ being the signature kernel \cite{chevyrev2022signature}.
Before we define the signature of a path which is a sequence of iterated integrals, we use an abbreviated notation for the integral.
For a path $\x \in \D :=\{\x: [0,T] \to \R^d\}$ where at each time $t\in [0,T]$, $\x_t := (x_t^1, \ldots, x_t^d)$ is the value of the path at time $t$, the integral of the path from time $0$ to $t$ of asset $i \in \{1,\ldots,d\}$ is denoted as $S(\x)^i_{0,t}$.
\begin{equation} \label{eq:sin_int}
    S(\x)^i_{0,t} := \int_{0 < s < T} d\x^i_s = \x^i_T - \x^i_0.
\end{equation}

For any pair $i,j \in \{1, \ldots, d\}$, the double-iterated integral is denoted by $S(\x)^{i,j}_{0,T}$ and defined as
\begin{align} \label{eq:dbl_int}
    S(\x)^{i,j}_{0,T} &=\int_{0 < r < s < T} d\x^i_{r}\,d\x^j_{s} \\
    &= \int_{0 < s < T} S(\x)^i_{0,s} d\x^j_s.
\end{align}

This can be done recursively so that for any integer $k \geq 1$ and collection of indices $i_1, \ldots, i_k  \in \{ 1, \ldots, d\}$, we have
\begin{equation}
    S(\x)^{i_1,\ldots, i_k}_{0,T} = \int_{0< s < T} S(\x)^{i_1,\ldots, i_{k-1}}_{0,s} d\x^{i_k}_s.
\end{equation}

\begin{definition}[Signature of a path] \label{def:signature}
    Let $T \in \R$ be the time horizon and $\x \in \D$.
    The signature of path $\x$ over the time interval $[0,T]$ is the infinite sequence of real numbers which are the iterated integrals of $\x$.
    \begin{flalign*}
            \Sig(\x)_{0,T} := (1, S(\x)^1_{0,T}, \ldots, S(\x)^d_{0,T}, S(\x)^{1,1}_{0,T}, S(\x)^{1,2}_{0,T}, \ldots) \in \TV
    \end{flalign*}
    where the zeroth term is defined by convention to be 1 and the superscript runs through all possible combinations of indices $i_1, \ldots, i_k \in \{1, \ldots, d\}$ and $k \geq 1$.
\end{definition}

The signature of a path can be seen as a feature mapping from the path to an infinite dimensional feature space.
The following properties make the signature map a good candidate for a feature transformation for paths.:

\begin{enumerate}
    \item {\textbf{Uniqueness of the signature of a path}:
        The signature $\Sig(\x)_{0,T}$ is unique to $\x$ up to a tree-like equivalence.
        Refer to \cite{hambly2010uniqueness} for the definition of tree-like equivalence.
        Loosely speaking, there are 3 main instances that are invariant for the signature which correspond to instances where a path integral is invariant.
        First, a translation, i.e., $\x + C$ where $C$ is a constant value, would not change the value of the path integral since $d\x = d(\x + C)$.
        Second, a time reparameterisation, i.e., $\x_{\psi(t)}$ where $\psi: [0,T] \to [0,T]$ is a strictly increasing, continuous and surjective function, would not change the value of the path integral with a simple change of variables.
        Third, any retracements within the path, i.e., going from point $A$ to $B$ then back to $A$ via the exact same path, would not add any value to the path integral.
        These are all captured by the definition of a tree-like equivalence.
        }
    \item \textbf{Factorial decay of the signature terms} \cite[Proposition 2.2]{lyons2007differential}:
        Assuming $\x$ is a path of finite total variation. For each $1 \leq k \in \N$, the norm of the collection of terms of the signature of order $k$ (i.e. $k$ times iterated integrals) is bounded as follows
        \begin{equation}
            \left\lVert \left( S(\x)^{i_1,\ldots, i_k}_{0,T} \right)_{(i_1,\ldots,i_k) \in \{1,\ldots,d\}} \right\rVert_{(\R^d)^{\otimes k}} \leq \frac{\lVert \x \rVert_1^k}{k!}
        \end{equation}
        where $\lVert \x \rVert_1 := \underset{\substack{0=t_0 < \cdots <t_n=T \\ n \in \N}}{\sup} \sum_{i=1}^N\lvert \x_{t_{i}}-\x_{t_{i-1}}\rvert$ is the total variation of the path.
    \item \textbf{Universal non-linearity} \cite[Proposition 3]{fermanian2021embedding} :

        Let $D \subseteq \Pspc$ be a compact subset\footnote{Refer to \cite[Section 8]{lyons2014rough} on the topology of such a set of paths} of the space of paths with bounded total variation from $[0,T]$ to $\R^d$ such that for any $\x \in D$, $\x_0=\mathbf{0}$ and at least one dimension of $\x$ is a \emph{fixed} monotone coordinate (for uniqueness of the signature).
        Let $g : D \to \R$ be a continuous function. Then, for any $\epsilon > 0$, there exists a linear functional $h$ such that for all $\x \in D$,
        \begin{equation}
            \left\vert g(\x) - h(\Sig(\x)) \right\vert < \epsilon
        \end{equation}

        In other words, continuous functions on the path can be represented as a linear combination of the signature terms arbitrarily well.
\end{enumerate}

As the signature of a path is defined as a series of iterated integrals, it is the increments of the path that are significant in determining the signature.
For most downstream tasks related to financial time series, such as portfolio optimisation or risk management, it is the returns that are the important features rather than the absolute values of the prices.
This is the motivation behind using the log of prices as our path since increments of the log path are precisely the log returns.


{
The above properties also serve as the motivation for augmenting the path with a fixed time dimension, i.e., $\x \mapsto (t,\x)$.
A fixed time dimension anchors the path and ensures that the signature of each time augmented path is unique.
The universal non-linearity requires that at least one dimension of the path is monotone which is ensured by the augmentation of the time dimension.
}

Due to property 2, we can approximate the signature with a truncation at order $m$ and have a bound on the error between the truncated signature and the non-truncated signature.
\begin{definition}[Truncated signature]
    The truncated signature of order $m$ of a path $\x \in \Pspc$ is defined as
    \begin{equation*}
        \Sig^m(\x)_{0,T} := (1, S(\x)^1_{0,T}, \ldots, S(\x)^d_{0,T}, S(\x)^{1,1}_{0,T}, S(\x)^{1,2}_{0,T}, \ldots, S(\x)^{i_1,\ldots, i_m}_{0,T}) \in \TVm
    \end{equation*}
    where all combinations of indices $i_1, \ldots, i_m  \in \{ 1, \ldots, d\}$ are included.
\end{definition}

Indeed, due to property 2, we have for all $\x \in \Pspc, m \in \N$ that
\begin{equation}
    \left\lVert \left( S(\x)^{i_1,\ldots, i_k}_{0,T} \right)_{(i_1,\ldots,i_k) \in \{1,\ldots,d\}, k \geq m+1} \right\rVert_{\prod^{\infty}_{k=m+1}(\R^d)^{\otimes k}} \leq \sum^\infty_{k=m+1}\frac{\lVert \x \rVert_1^k}{k!}
\end{equation}

We refer the reader to \cite{lyons2022signature, chevyrev2016primer} for a more detailed exposition of the signature properties.

\subsection{Signature Kernel} \label{sec:signature_kernel}

We can construct a signature kernel as the inner product of the signatures of the two paths $\x$, $\y \in \D$ i.e. $\langle \Sig(\x), \Sig(\y) \rangle$.
This signature kernel is a universal kernel due to the universal non-linearity property\footnote{Recall that this property assumes that the path $\x \in D$ where $D \subseteq \D$ is a compact subset of the space of bounded variation paths from $[0,T]$ to $\R^d$ but it is possible to extend the universality to non-compact sets as shown in \cite{chevyrev2022signature}.} of the signature hence it is also characteristic.

It is also possible to first lift the paths into a feature space to enrich the path features before computing the signature of the augmented path.
The key insight of \cite{kiraly2019kernels} is that the kernel trick can be used to avoid explicitly computing the transformation when calculating the inner product of the signature of paths lifted into a RKHS.

\begin{definition}[Signature kernel] \label{def:signature_kernel}
    Let $(\Hb,\langle \cdot \; , \; \cdot \rangle_\Hb)$ be the RKHS of a kernel $k: \Xt \times \Xt \to \R$ and $\Sig^m$ be the truncated signature map of order $m$.
    The signature kernel is defined as the inner product $\D \times \D \ni (\x,\y) \mapsto \Sigkm(\x, \y) := \langle \Sig^m(k_\x), \Sig^m(k_\y) \rangle$ of the lifted paths $k_\x = (k_{\x_t})_{t\in [0,T]}$, $k_\y = (k_{\y_t})_{t\in [0,T]}\in \D_\Hb := \{\x: [0,T] \to \Hb \}$ i.e. the space of paths evolving in the RKHS $\Hb$ and $k_z=k(z, \cdot)$ for $z \in \Xt$.
\end{definition}

The kernel that is used to lift the paths is often called a static kernel and explicit computations of $\Sig(k_\x), \Sig(k_\y)\in \THm$ are avoided with only evaluations of the static kernel $k(\x_t,\y_t)$ required.
{
This can be done efficiently by approximating the signature kernel with discretised integrals as shown in \cite[Algorithm 3]{kiraly2019kernels}\footnote{{For an exact computation, there is also \cite[Algorithm 6]{kiraly2019kernels} for higher order computations. Choosing the order to be equal to the truncation level would provide the exact computation. We experimented with the higher order computations which requires significantly more memory but did not see meaningful improvements hence we use the 1st order approximations in our final results. There is also a method to compute an untruncated signature kernel by numerically solving a PDE where the solution is equal to the untruncated signature kernel introduced in \cite{salvi2021signature}. However, the MMD failed to converge in our experiments with this method.}}.
}
We primarily use the rational quadratic kernel as the static kernel.
It is a characteristic kernel and has been reported to perform well in practice (see e.g. \cite{binkowski2018demystifying}).
\begin{definition}[Rational quadratic kernel] \label{def:rational_quadratic_kernel}
    For $x,y \in \R^d$, the rational quadratic kernel is defined as
    \begin{equation*}
        k_{\text{quad}}(x, y) = \left( 1 + \frac{\Vert x - y \Vert^2}{2 \alpha l^2} \right)^{-\alpha}
    \end{equation*}
    where $\alpha > 0$ is the scale parameter and $l > 0$ is the length scale parameter.
\end{definition}
The rational quadratic kernel can be seen as a weighted sum of an infinite number of Gaussian kernels with different length scales where the squared inverse length scale follows a gamma distribution (see \cite[Section 4.2]{williams2006gaussian}).
The choice of the length scale parameter is important as it determines the smoothness of the kernel function.
It is a hyperparameter that needs to be tuned for the specific problem at hand.
The larger the length scale, the smoother the kernel function which means it is less sensitive to differences in the inputs.
However, if the chosen length scale is too small then even the smallest differences arising from samples from the same distribution will be overemphasised.

The main hyperparameter is the truncation order $m$ of the signature.
We modify the algorithm provided in \url{https://github.com/tgcsaba/KSig} for our specific use case.

\subsection{Sequences as Paths} \label{sec:sequences_as_paths}

The paths considered in the mathematics of signatures are continuous paths.
However, we only ever observe discrete sequences of prices in a financial time series.
Therefore, we need to convert the observed discrete sequence of prices into a continuous path with interpolation being the most obvious choice.

The two types of interpolation schemes typically paired with the use of signatures are linear and rectilinear interpolation which is discussed in \cite{fermanian2021embedding}.
Figure \ref{fig:interpolation} shows the difference between the two interpolation schemes applied to S\&P 500 index time series from 1-Jan-1995 to 15-Mar-1995.
We employ linear interpolation as it does not require any data transformations on the historical time series data and it demonstrates good performance in our experiments.

\begin{figure} [!ht]
    \centering
    \subfigure[Rectilinear interpolation]{
        \includegraphics[scale=0.4]{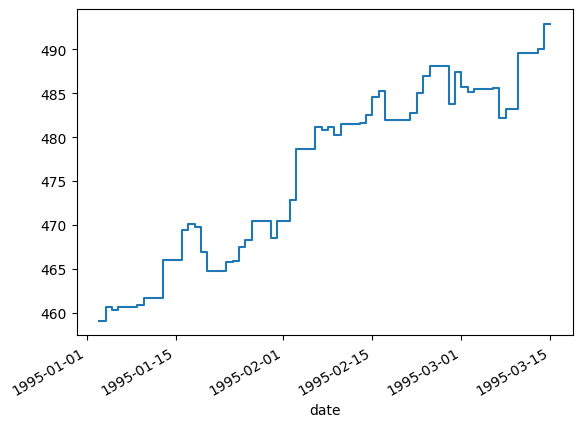}
        \label{fig:rectilinear}
    }
    \subfigure[Linear interpolation]{
        \includegraphics[scale=0.4]{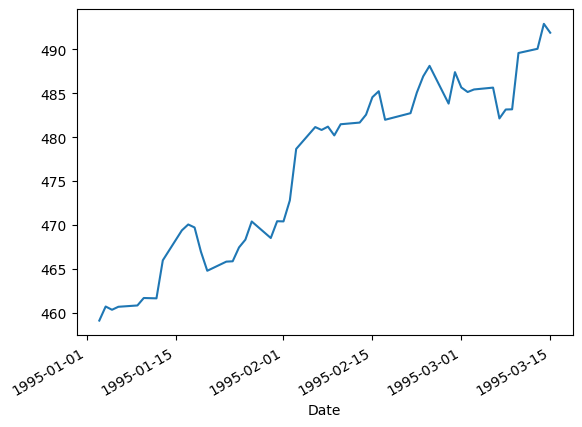}
        \label{fig:linear}
    }
    \caption{Interpolation schemes applied to data of the S\&P 500 index from 1-Jan-1995 to 15-Mar-1995}
    \label{fig:interpolation}
\end{figure}

This procedure can be seen as creating an embedding into the space of continuous functions as termed in \cite{fermanian2021embedding}.
Adding the time dimension to the sequence can be seen as enriching the embedding.
One particular augmentation that has reportedly good empirical performance (see e.g. \cite{fermanian2021embedding}) is the lead-lag augmentation introduced in \cite{chevyrev2016primer}.

\begin{definition} [Lead-lag and time augmentation] \label{def:lead-lag}
    Let $d,n \in \N$ and $\x = (\x_{t_0}, \x_{t_1}, \ldots, \x_{t_n}) \in \R^{d(n+1)}$ be a sequence of prices for $d$ assets at times $(t_0,\ldots,t_n) \in \R^{n+1}$.
    The lead-lag and time augmentation of $\x$ with lag $l \in \N$ is defined as $f^l_{\text{aug}}: \x \mapsto (\mathbf{t}, \x^{\text{lead}}, \x^{\text{lag}}) \in \R^{3d(n+1+l) + 1}$ and
    \begin{equation*}
        \mathbf{t}_i =
        \begin{cases}
            t_i & \text{if} \; i \leq n \\
            t_n & \text{if} \; i > n
        \end{cases} \quad \text{,} \quad
        \x^{\text{lead}}_i =
        \begin{cases}
            \x_{t_i} & \text{if} \; i \leq n \\
            \x_{t_n} & \text{if} \; i > n
        \end{cases} \quad \text{and} \quad
        \x^{\text{lag}}_i =
        \begin{cases}
            0 & \text{if} \; i < l \\
            \x_{t_{i-l}} & \text{if} \; i \geq l
        \end{cases}
    \end{equation*}
    for $i \in \{0, \ldots, n+l\}$.
\end{definition}

The lead-lag augmentation basically adds an additional dimension that lags behind the original sequence by a fixed lag i.e. concatenating a time shifted version of the original sequence.
This will require the end points of the original sequence to be extended and padded with the last value of the sequence.
To the best of our knowledge, there are no theoretical results explaining why the lead-lag augmentation is effective.
We speculate that it is due to the fact that the lead-lag augmentation allows the signature to capture information about the temporal dependencies of the sequence.

Once the generative model produces the log prices sequence, we augment it with the time sequence and the lead-lag sequence.
As we measure time in years, the time sequence is calculated based on the number of calendar days elapsed divided by 365.
For each sample sequence, we standardise both the log sequence of prices and the time sequence to start from zero by deducting the first value of the sequence from all the values of the sequence.
We add the lead-lag augmentation with a lag of 1 trading day to the sequence i.e. in the language of Definition \ref{def:lead-lag}, we set $l=1$.

\section{Generative Model} \label{sec:generative_model}

We present a generative model which is a neural network trained using the signature kernel based maximum mean discrepancy as the loss function\footnote{Refer to \cite{goodfellow2016deep} for an exposition on training neural networks}.
For the sake of illustration, we demonstrate in the rest of the paper with the price series of a single asset but the method can be extended to multiple assets with $\Xt \subseteq \Rd, d \geq 2$.
The model takes in three inputs based on a sequence, assumed to be a time series of a financial asset price, that is partitioned into $n$ time steps, $t_0, t_1, \ldots, t_n$ derived from the reference data. The three inputs at time $t_i$ are:
\begin{enumerate}
    \item Log return of the previous time step, $r_{t_{i-1}} \in \R$
    \item Time delta to the next time step (in calendar years), $\Delta t_i = t_i - t_{i-1} \in \R_+$
    \item Noise $z_{t_i} \in \R^{d_z}$ where $d_z \in \N$ is the dimension of the noise.
\end{enumerate}
Noise is a key ingredient of generative models that injects randomness into generated sequences, since the output would be deterministic without this component.
The importance of the two other inputs will be explored in Section \ref{sec:ablation}.

The output of the generative model is the log return of the current time step, $r_{t_i} \in \R$.
These returns are cumulatively summed to obtain the log price sequence.
Once we have the generated log sequence, we perform the lead-lag and time augmentation on the sequence.
With a batch of augmented sequences, we calculate the MMD between the generated sequences and the reference sequences from real data.
We employ variants of stochastic gradient descent (see e.g. \cite[Section 8.5]{goodfellow2016deep}) to minimize the MMD between the generated sequences and the reference sequences from real data.
Our objective is to have the generator output sequences that are close to the reference sequences, as measured by the MMD.

Using the MMD based on the signature kernel as a discriminator to train a generative model can be classified as a Score-based Generative Model (SGM) named by \cite{song2019generative}.
This is in contrast to Generative Adversarial Networks (GANs) where the discriminator also uses a neural network that is trained to distinguish between the real and generated samples as demonstrated in \cite{goodfellow2014generative}.
The adversarial nature of GANs are known to suffer from instabilities in training (see e.g. \cite{wiese2020quant,binkowski2018demystifying}).
We did not observe any instabilities during training using the non-adversarial SGM architecture.

\subsection{Neural Network Architecture} \label{sec:architecture}

At the heart of the generative model is a recurrent neural network that takes in the inputs $r_{t_{i-1}}$, $\Delta t_i$ and $z_{t_i}$ and outputs the log return of the current time step, $r_{t_i}$.
The specific neural network architecture starts with a single Long Short Term Memory (LSTM) layer, pioneered by \cite{hochreiter1997long}, with a hidden size of 64 neurons.

\begin{definition} [Long short term memory (LSTM) layer] \label{def:lstm}
    Let $d, n \in \N$, $\z = (\z_{t_1}, \ldots, \z_{t_n}) \in \R^{dn}$ where $\z_{t_i} \in \Rd$ is a column vector called the input at time $t_i$. With $h \in \N$ called the hidden size, we define a set of parameters $\theta:=\{ W_{i i}, W_{i f}, W_{i g}, W_{i o}, W_{h i}, W_{h f}, W_{h g}, W_{h o}, b_{i i}, b_{i f}, b_{i g}, b_{i o}, b_{h i}, b_{h f}, b_{h o} \}$ where $W_{i *} \in \R^{h \times d}, W_{h *} \in \R^{h \times h}$ are matrices called weights and $b_{i *}, b_{h *} \in \R^h$ are column vectors called biases. Let $c_{t_0}, h_{t_0} = (0,\ldots,0) \in \R^h$ be column vectors called the cell state and hidden state at time $t_0$ respectively. The LSTM layer is defined as
    \begin{equation} \label{eq:lstm}
        f_\theta: (\z_{t_i}, c_{t_{i-1}}, h_{t_{i-1}}) \mapsto (c_{t_i}, h_{t_i})
    \end{equation}
    based on the following operations:
    \begin{align}
        i_{t_i} &= \sigma\left(W_{i i} \z_{t_i}+b_{i i}+W_{h i} h_{t_{i-1}}+b_{h i}\right) \\
        f_{t_i} &= \sigma\left(W_{i f} \z_{t_i}+b_{i f}+W_{h f} h_{t_{i-1}}+b_{h f}\right) \\
        g_{t_i} &= \tanh \left(W_{i g} \z_{t_i}+b_{i g}+W_{h g} h_{t_{i-1}}+b_{h g}\right) \\
        o_{t_i} &= \sigma\left(W_{i o} \z_{t_i}+b_{i o}+W_{h o} h_{t_{i-1}}+b_{h o}\right) \\
        c_{t_i} &= f_t \odot c_{t_{i-1}}+i_t \odot g_t \\
        h_{t_i} &= o_t \odot \tanh \left(c_{t_i}\right)
    \end{align}
    where $\sigma$ is the sigmoid function\footnote{The sigmoid function is defined as $\sigma(x) := \frac{1}{1+e^{-x}}$} applied element-wise as is the tanh function, and $\odot$ is the element-wise product.
\end{definition}

The three inputs $r_{t_{i-1}}$, $\Delta t_i$ and $z_{t_i}$ are concatenated to form the input $\z_{t_i}$ for the LSTM layer, hence $d=2+d_z$ in the notation of Definition \ref{def:lstm}, in the case of a single asset.
The cell state and hidden state that is passed from one time step to the next can loosely be thought of as the long term and short term memory of the LSTM, hence the name.
The hidden state of the LSTM at time $t_i$, $h_{t_i}$, is then passed through a linear layer to produce the log return $r_{t_i}$.

\begin{definition} [Linear layer] \label{def:linear}
    Let $d,h \in \N$, $h_{t_i} \in \R^h$ be a column vector called the input at time $t_i$. We define a set of parameters $\psi := \{W \in \R^{d \times h}, b \in \R^d$\}. The linear layer is defined as
    \begin{equation} \label{eq:linear}
        f_\psi: h_{t_i} \mapsto r_{t_i} := W h_{t_i} + b \in \Rd.
    \end{equation}
\end{definition}

As we are working with a single asset, $d=1$ in the notation of Definition \ref{def:linear}.
The full architecture is illustrated in Figure \ref{fig:architecture}.

\begin{figure} [!ht]
    \centering
    \includegraphics[scale=0.4]{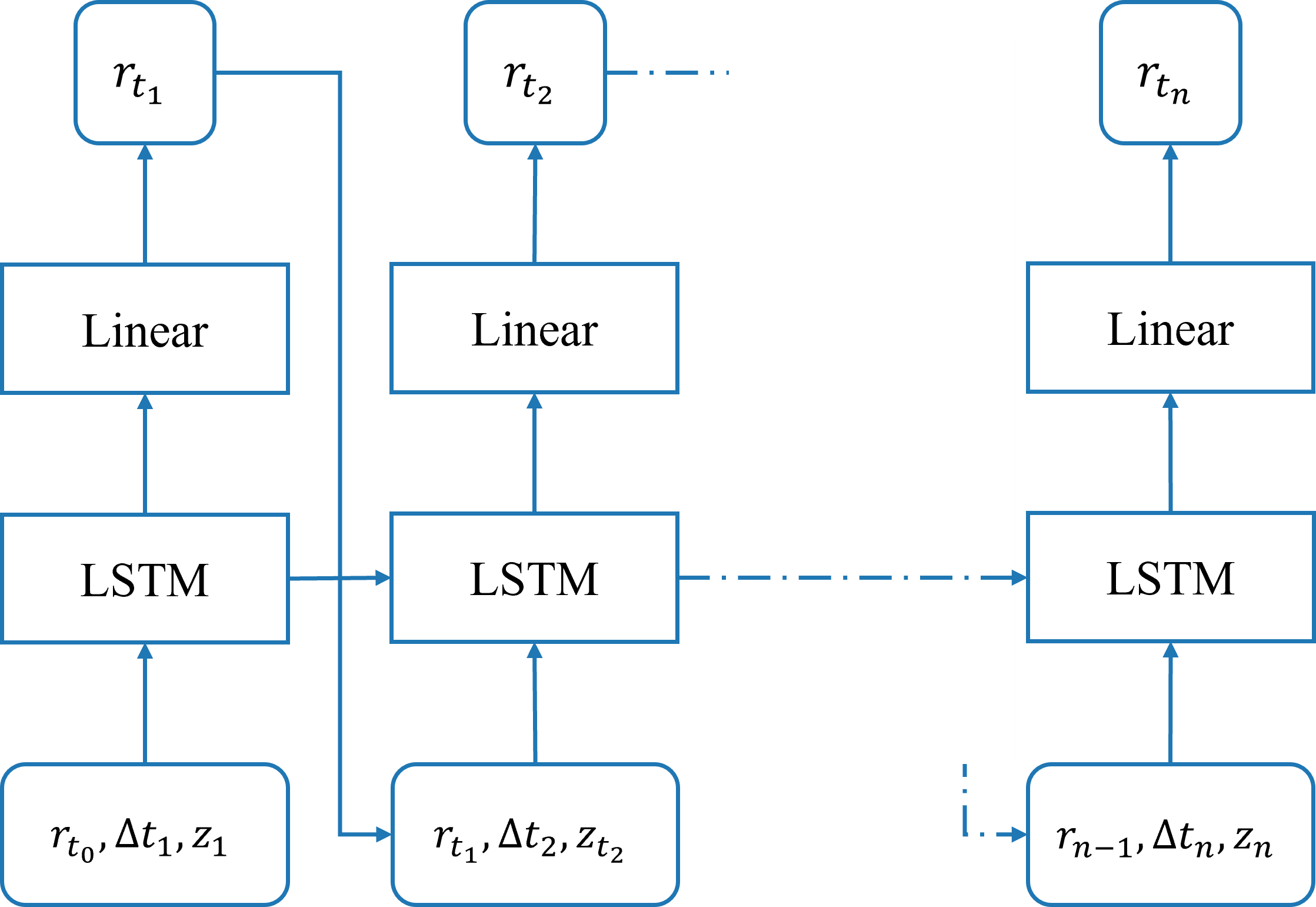}
    \caption{Neural network architecture}
    \label{fig:architecture}
\end{figure}

\subsection{The Cell State and Hidden State of the LSTM} \label{sec:conditioning}

For the first time step, the cell state and hidden state of the LSTM layer are initialised to zero.
Therefore, the first log return is generated without any influence from past returns.
However, we can also condition the cell state and hidden state with historical data before we start generating the sequence.

To illustrate using Figure \ref{fig:architecture}, we will calculate the first input $\Delta t_1$ from historical data, use $r_{t_0} = 0$ as a normalised value and feed it into the LSTM layer with the first noise input $z_{t_1}$.
The output of the linear layer will be the first generated log return $r_{t_1}$ but this will not be used as the input into the LSTM layer for the next time step nor will it be used in any downstream process.
Instead, we will continue to use the historical data to construct the inputs $r_{t_1}$ and $\Delta t_2$ for the next time step.
This will continue to a predetermined number of time steps before we start using the output as the return to be fed into the next time step and for use in any downstream process.
In this case, the cell state and hidden state of the LSTM will be conditioned on the historical data preceding the generated sequence.
The importance of this step will be explored in Section \ref{sec:ablation}.

Once we have collected all the outputs $r_{t_i}$, we cumulatively sum them to obtain the generated log price sequence $\x = (x_{t_0}, x_{t_1}, \ldots, x_{t_n})$ where $x_{t_i} = \sum_{j=0}^i r_{t_j}$ with $r_{t_0}$ normalised to zero.
If we used $k \in \N$ historical log price data points to condition the LSTM, then the first $k$ values of the generated sequence will be the same as the historical data.
To prevent an artificial advantage to the generative model, we will only use the log returns after the conditioning phase for both the generated and reference data sequence i.e. we will discard the first $k$ generated log returns to get
\begin{equation} \label{eq:trunc}
    \x_{\text{trun}} = (x_{t_{k}}, x_{t_{k+1}}, \ldots, x_{t_n}).
\end{equation}
In our experiments reported in Section \ref{sec:experiments}, we use $k=50$.

We perform the lead-lag and time augmentation on the generated log price sequence to obtain $\tilde{\x} = f^1_{\text{aug}}(\x_{\text{trun}})$.
The same augmentation is performed on the reference data log price sequence to obtain $\tilde{\y}$.
A batch of augmented generated sequences $\mathbf{X} := \{\tilde{\x}_1, \ldots, \tilde{\x}_b\}$ and augmented reference sequences $\mathbf{Y := \{\tilde{\y}_1, \ldots, \tilde{\y}_b\}}$, where $b$ is the batch size, are then used to calculate the unbiased MMD estimator using the signature kernel i.e. $\mmd(\mathbf{X}, \mathbf{Y}; \Sigkm)$.

\begin{remark}
    We emphasize that the conditioning described above is different from generating a conditional distribution such as in \cite[Section 5]{liao2024sig} and \cite[Section 3.4]{issa2023non}.
    In both cases, the generator is trained to reproduce the conditional distribution of the future prices (or returns) given a \textbf{specific} past sequence.
    In \cite{liao2024sig}, this is done by first assuming the true conditional expectation of the future price signature is a linear function of the past price signature and then using linear regression to estimate the parameters.
    Thereafter, the Monte Carlo estimate of the $l^2$ norm between the generator's prediction given the past price serves as the loss function to train the generator.
    Using a linear estimator for the future path signature is no longer possible when we first lift the price path into a RKHS such as in our approach and in \cite{issa2023non}.
    Therefore, the approach in \cite{issa2023non} is to use a \textbf{single} historical sample of past and future sequence to estimate the score of multiple generated sequences conditioned on the same past sequence.
    In our approach, we estimate the score (MMD) by comparing a batch of generated sequences conditioned on a \textbf{range} of past sequences to the batch of future sequences resulting from the set of past sequences.
    We found that this approach combined with our choice of noise distribution described in Section \ref{sec:noise} works well when generating long sequences (equal to about 1 calendar year of trading days) as we will show in Section \ref{sec:experiments}.
\end{remark}

\subsection{Noise} \label{sec:noise}

Noise is a key input to the generative model as it is the source of the stochasticity.
We can think of the generative model as mapping from the (potentially higher dimensional) noise distribution to the distribution of the returns.
If we model the noise by a distribution that is closer to the distribution of the returns, then it should be easier for the model to learn suitable parameters $\theta$ and $\psi$.
We test this conjecture with a toy example in Section \ref{sec:noise_dist}.
For the rest of the experiments in Section \ref{sec:experiments}, we use a moving average model to simulate the noise.

\begin{definition}[Moving average model] \label{def:ma_model}
    For $\omega,\beta_1,\ldots,\beta_p \in \R$ and $p \in \N$, the moving average model of order $p$ (MA$(p)$) is defined as
    \begin{flalign*}
        z_{t} &= \sigma_{t} \epsilon_t \\
        \sigma_{t}^{2} &= \omega + \sum_{i=1}^{p} \beta_i z_{t-i}^{2} \\
    \end{flalign*}
    where $z_t$ is the target variable and $\epsilon_t$ is a standard normal random variable.
\end{definition}

The MA model assumes that the target variable, which are the log returns, follows a normal distribution with a conditional variance that is a function of the past squared returns.
However, the distribution of returns are known to have fatter tails and taller peaks than the normal distribution.
Therefore, we would first like to transform the returns to a distribution closer to the normal distribution before fitting the MA model\footnote{An alternative is to choose a different distribution for $z_t$ in Definition \ref{def:ma_model} that is closer to the distribution of the returns. However, this increases the complexity in fitting the model and the choice of distribution is not obvious.}.
One way to do this is to use the \textit{inverse} Lambert W transformation as demonstrated in \cite{goerg2015lambert}.

\begin{definition}[Lambert W transformation] \label{def:lambert_w}
    Let $\delta \in \R$, $R$ be a $\R$ valued random variable with mean $\mu_R$ and standard deviation $\sigma_R$ and $U:=\frac{R-\mu_R}{\sigma_R}$. The Lambert W transformation of $R$ is defined as $W(R):= U \exp{\left(\frac{\delta}{2}U^2\right)} \sigma_R + \mu_R$.
\end{definition}

When $\delta >0$, the Lambert W transformation has the effect of reshaping the distribution of $R$ to have fatter tails.
This means that $W^{-1}$, the inverse function of $W$, that we call the \textit{inverse} Lambert W transformation will reshape the distribution of the returns to have lighter tails.
We can therefore use the \textit{inverse} Lambert W transformation to reshape the returns to have a distribution closer to the normal distribution before fitting the MA model.

Before performing the \textit{inverse} Lambert W transformation, we perform a preprocessing step for numerical reasons explained in \cite{lecun2002efficient}.
First, we annualise the log returns by dividing them by the time elapsed in years.
Then we normalise them to have zero mean and unit variance to get the annualised and normalised returns, $r_Z$.
The \textit{inverse} Lambert W transformation is then applied by estimating the parameters $\delta, \mu_R$ and $\sigma_R$ using iterative generalised method of moments\footnote{Although $r_Z$ has been normalised with zero mean and unit variance{,} the estimation process will aim to produce a kurtosis of 3 for the transformed variables which will cause the value of $\mu_R$ and $\sigma_R$ to deviate from zero.} from \cite[Section 4.2]{goerg2015lambert}.
Using the estimated parameters, we obtain the "Gaussianised" returns, {$r_W = W^{-1}(r_Z)$}.

The parameters $\omega$ and $\beta_1,\ldots,\beta_p$ of the MA model are then fitted using maximum likelihood estimation with $r_W$ as the target variable.
When simulating noise for the generative model, we will sample a date from the historical data and label it $t_0$.
The prior $p-1$ dates from the historical data will be labeled $t_{1-p}, \ldots, t_{-1}$, where $p$ is the order of the MA model.

{
\begin{definition}[Noise generation with MA model] \label{def:ma_model_noise}
    Let the noise at time $t_i$ be denoted by $z_{t_i} \in \R^{d_z}$ where $d_z \in \N$ is the dimension of the noise.
    Let $\Hz(i) = (z_{t_{i+1-p}},\ldots,z_{t_i}) \in \R^{d_z p}$ be the realised noise sequence from time $t_{i+1-p}$ to $t_i$ for $i \in \{1, \ldots, n\}$ and $p \in \N$ be the order of the MA model  from Definition~\ref{def:ma_model}.
    Specifically, if $z_{j,t_i}$ is the $j$-th component of $z_{t_i}$, then let $z_{j,t_i} = r_{W,t_i}$ for $i \in \{1-p, \ldots, 0\}$, $j \in \{1,\ldots,d_z\}$ where $r_{W,t_i} \in \R$ is the Gaussianised return at time $t_i$.
    Given parameters $\eta = \{\omega,\beta_1,\ldots,\beta_p\}$ of the MA model from Definition~\ref{def:ma_model}, we define the noise at time $t_i$ via the map $f_{\eta}$ given by
    \begin{equation} \label{eq:noise}
        \R^{d_z p} \ni (z_{t_{i-p}},\ldots,z_{t_{i-1}})=\Hz(i-1) \mapsto  f_\eta(\Hz(i-1)):= z_{t_i} \ \text{for} \  i \in \{1, \ldots, n\}
    \end{equation}
    which is explicitly defined through the following operations:
    \begin{align}
        \sigma^2_{j,{t_i}} &= \omega + \sum_{k=1}^{p} \beta_k z_{j,t_{i-k}}^2 \\
        z_{j,t_i} &= \sigma_{j,t_i} \epsilon_{j,t_i}
    \end{align}
    where $\epsilon_{t_i}$ is a standard normal random variable.
\end{definition}
In other words, for the first $p$ time steps, the historical data is used in the generative process, thereafter, the noise is generated using the MA model based on the previous $p$ noise values.
In our experiments reported in Section \ref{sec:experiments}, we use $p=20$.
}

In summary, the goal of using a MA model, fitted in this manner, is to produce noise that is closer to the distribution of the historical returns than i.i.d.\,Gaussian noise.
{
Note that we do not reverse the transformations, which include inverse Lambert W, annualisation and normalisation, to the sampled noise $z_{j,t_i}$ as we found that this did not work empirically.
We speculate that this is due to the reverse process producing returns with a much more extreme fat tail than observed in the original data.
Therefore, we leave the more difficult task of mapping from the noise to the empirical log returns to be done by the neural network.
}

\subsection{Training Algorithm} \label{sec:training_algorithm}

With all the components in place, we can now describe the training algorithm for the generative model in Algorithm \ref{alg:training}.
Details of the implementation can be found in \url{https://github.com/Anonymous1701/mmd-repo}.

\begin{algorithm} [!ht]
    \caption{Training algorithm} \label{alg:training}
    \KwIn{Initial generative model parameters $\Theta = \{\theta,\psi\}$, learning rate $\alpha$, MA model parameters $\eta$ {and $p$}, Dataset $\Data$, number of epochs $N$, batch size $B$, truncation level $m$, number of historical data points for the conditioning phase $k$, sample length $(n+1)$ and noise dimension $d_z$}
    \KwOut{Trained generative model parameters $\Theta$}
    \For{$n=1$ \KwTo $N$}{
        \For{$j=1$ \KwTo $\lfloor \lvert\Data\rvert / B \rfloor$}{
            \For {$b=1$ \KwTo $B$}{
                Sample a date from $\Data$ without replacement and label it as $t_0$; \\
                Retrieve the historical data {$\Hz(0)=(r_{W,t_{1-p}},\ldots,r_{W,t_0})$} and $\mathcal{H}=(r_{\Data,t_1},\ldots,r_{\Data,t_{k-1}},\Delta t_1,\ldots,\Delta t_n)$ from $\Data$; \\
                Sample a batch of noise {$z_{t_i} = f_\eta(\Hz(i-1))$} for $i \in \{1, \ldots, n\}$ according to equation \eqref{eq:noise}; \\
                Generate a sequence of log returns $r_{\Theta,t_i} = f_\psi(h_{t_i})$ and $h_{t_i} = f_\theta(r_{t_{i-1}}, \Delta t_i, z_{t_i})$ for $i \in \{1, \ldots, n\}$ by applying equations \eqref{eq:lstm} and \eqref{eq:linear}; \\
                Cumulatively sum $r_{\Theta,t_i},r_{\Data,t_i}$ for $i \in \{0,\ldots,n\}$ with $r_{\Theta,t_0} = r_{\Data,t_0} = 0$ to get the historical and generated log price sequence $\x_\Theta, \y$ respectively; \\
                Truncate first $k$ values for both sequences to get $\x_{\Theta,\text{trun}}, \y_{\text{trun}}$ according to equation \eqref{eq:trunc}; \\
                Perform lead-lag augmentation to get the sequences $\tilde{\x}_{\Theta,b} = f^1_{\text{aug}}(\x_{\Theta,\text{trun}})$ and $\tilde{\y} = f^1_{\text{aug}}(\y_{\text{trun}})$; \\
            }
            Collect batch of samples ${\mathbf{X}_\Theta = \{\tilde{\x}_{\Theta,1}, \ldots, \tilde{\x}_{\Theta,B}\}, \mathbf{Y} = \{\tilde{\y}_1, \ldots, \tilde{\y}_B\}}$; \\
            Compute $\mmd(\mathbf{X}_\Theta, \mathbf{Y}; \Sigkm)$; \\
            Obtain the gradient $\Delta_\Theta \leftarrow \nabla_\Theta \mmd(\mathbf{X}_\Theta, \mathbf{Y}; \Sigkm)$; \\
            Use preferred variant of stochastic gradient descent to update $\Theta \leftarrow \Theta - \alpha \Delta_\Theta$; \\
        }
        Perform early stopping and learning rate decay if required based on an average of $\mmd(\mathbf{X}_\Theta, \mathbf{Y}; \Sigkm)$ values; \\
    }
\end{algorithm}

\section{Experiments} \label{sec:experiments}

In this section, we describe the experiments conducted to train the generative model using Algorithm \ref{alg:training} and the results obtained.
There are mainly four parts to the experiments.
Firstly, we demonstrate that using noise that is closer to the distribution of the returns improves the performance of the generative model.
Secondly, we detail the hyperparameters for the experiment focused on a single asset, the S\&P 500 index and show that the trained generative model can produce sequences exhibiting behaviours according to stylised facts of financial time series.
Thirdly, the generative model is used to train a reinforcement learning agent to demonstrate an application for a portfolio optimisation task.
Finally, we show how to improve the robustness of the generative model by altering the noise distribution to different market regimes.

\subsection{Data Preprocessing} \label{sec:data_preprocessing}

We use the daily closing level of the S\&P 500 index from 1 Jan 1995 to 28 Dec 2023 sourced from Yahoo Finance.
In choosing the data period, there is a tradeoff between having enough data and ensuring the distribution is not too different from the present.
Although the S\&P 500 index has a much longer history, we chose 1995 as a cutoff, around the period when electronic trading was becoming more prevalent and the distribution of returns likely changed as a result (see e.g. \cite{freund1997market}).
We also withhold data from 19 Sep 2018 to 28 Dec 2023 as a test set for the portfolio optimisation task.

We aim to generate sequences with lengths of 250 data points, roughly equivalent to one year of data, and therefore capture the seasonal effects of the market.
In addition, we condition the neural network with 50 data points of historical data.
For clarity, this means $k=50$ and $n=299$ using the notation of equation \eqref{eq:trunc}, so that after truncation in line 9 of in Algorithm \ref{alg:training}, we have a generated log price sequence $\{\x_{t_{50}},\ldots,\x_{t_{299}}\}$ of length 250.

If we extract non-overlapping sequences from the historical data, there would be insufficient data to train the model.
Therefore, we use overlapping sequences with a stride of 50 data points.
Figure \ref{fig:stride} shows an example of overlapping sequences with a stride of 2 and length of 8.

Increasing the stride will reduce the overlapping between sequences but will also reduce the number of sequences available for training.
The choice of stride aims to balance the tradeoff between having enough data to train the model while ensuring the sequences have sufficient differences between them to effectively train the model.
This type of sampling will create biases in the training data but we leave the exploration of this issue to future work.

\begin{figure} [!ht]
    \centering
    \includegraphics[scale=0.5]{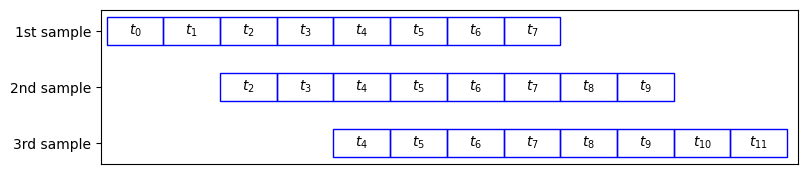}
    \caption{Example of overlapping sequences with a stride of 2 and length of 8}
    \label{fig:stride}
\end{figure}

\subsection{Noise Distribution} \label{sec:noise_dist}
We test the conjecture that using noise which is closer to the distribution of the returns will improve the performance of the generative model with a toy example utilising the Heston stochastic volatility model introduced in \cite{heston1993closed}.
We generate training data using the Heston model and fit two generative models.
The first model uses i.i.d.\,standard Gaussian noise and the second model uses zero mean Gaussian noise with stochastic variance\footnote{{The reader may wonder why we did not use the noise process described in Section~\ref{sec:noise}. 
The main goal is to provide some empirical evidence that if the noise input for the generative model is closer to the actual distribution of the paths, it should produce a better result. 
In this particular case, we chose the noise distribution to follow the CIR process for the variance of the price process since it is the actual "noise" process of the Heston model.}} based on the same parameters as the simulation but independently generated.
Details of the experiment parameters can be found in Appendix \ref{app:heston}.

Post-training, we generate 1,000 samples from each model and perform permutation based two sample tests using the unbiased statistic $\mmd$.
The null hypothesis is that the two samples are drawn from the same distribution.
We use 10,000 permutations to estimate the p-value.
The results are shown in Table \ref{tab:heston_mmd} and are consistent with the hypothesis that we can improve the generative model by using a noise distribution that is closer to the distribution of the returns.

\begin{table} [!ht]
    \centering
    \begin{tabular}{lrr}
        \toprule
        Model & $\mmd$ & p-value \\
        \midrule
        Stochastic volatility noise & 0.043 & 0.19 \\
        I.i.d.\,Gaussian noise & 1.434 & 0.0 \tablefootnote{The test statistic is significantly higher than all other permutation samples.} \\
        \bottomrule
    \end{tabular}
    \caption{Hypothesis testing using 10,000 permutations}
    \label{tab:heston_mmd}
\end{table}

Using i.i.d.\,Gaussian noise does not generate returns exhibiting volatility clustering which is a stylised fact of financial time series (see e.g.\cite{cont2001empirical}).
Indeed, the S\&P 500 index log returns in Figure \ref{fig:spx_vol_clustering} exhibit clear volatility clustering behaviour while the i.i.d.\,Gaussian noise generated log returns in Figure \ref{fig:lack_vol_clustering} do not.
In order to create a volatility clustering effect on the Gaussian noise sequence, we model the variance of the noise as a moving average (MA) of the squared returns.
This allows the generative model to produce returns that exhibit volatility clustering as seen in Figure \ref{fig:ma_vol_clustering}.

\begin{figure} [!ht]
    \centering
    \subfigure[S\&P 500 log returns]{
        \includegraphics[scale=0.33]{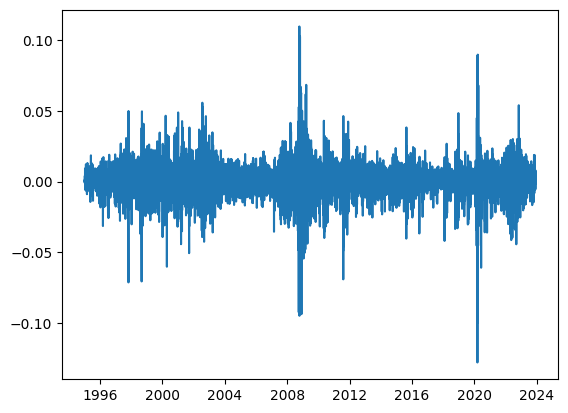}
        \label{fig:spx_vol_clustering}
    }
    \subfigure[Gaussian noise generated log returns]{
        \includegraphics[scale=0.33]{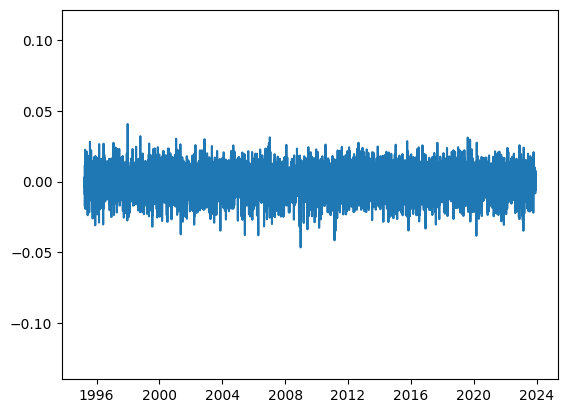}
        \label{fig:lack_vol_clustering}
    }
    \subfigure[MA(20) noise generated log returns]
    {
        \includegraphics[scale=0.33]{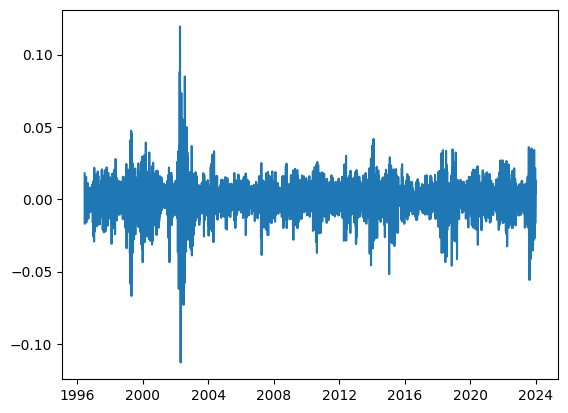}
        \label{fig:ma_vol_clustering}
    }
    \caption{Volatility clustering in generated log returns}
    \label{fig:vol_clustering}
\end{figure}

\subsection{Hyperparameters} \label{sec:hyperparameters}

\subsubsection*{MA Model}
We fit the MA model with maximum likelihood estimation using the python package \texttt{arch} \cite{sheppard2024bashtage/arch}.
We found empirically by a hyperparameter search that the MA$(20)$ model i.e. 20 lags produces good results.

\subsubsection*{Rational Quadratic Kernel}
There are two main hyperparameters for the rational quadratic kernel from Definition \ref{def:rational_quadratic_kernel} which are $\alpha$, the shape parameter for the Gamma distribution and $l$, the length scale parameter
We used a default value of $\alpha=1$ and experimented with different values of $l$ ending with a value of $l=0.1$ which produced the best results.

\subsubsection*{Truncated Signature}
We experimented by progressively increasing the truncation order of the signature kernel $m$ from 5 to 10.
Due to the long sequences, increasing the order further still has a significant impact on the absolute value of the $\mmd$.
However, we did not see a material improvement in the quality of the generated sequences when we progressively increased the order beyond 10 which is why we used $m=10$.

\subsubsection*{Noise}
For the noise input, there is a choice for the dimension of the noise $d_z$.
{
We investigated $d_z=1,2,4,6,8$ and found that $d_z=4$ appears to be a minimum to achieve a good result in terms of convergence in the $\mmd$.}
The better performance when using a noise dimension of 4 could be due to a larger number of parameters in the model since a larger input size creates more parameters in the first layer of the neural network.
The larger number of parameters could allow the model to better capture the complexity of the data.
Therefore, to create a level playing field, we also experimented with quadrupling the hidden size {for $d_z=1,2,4$} to see if the performance gap between the noise dimensions would be reduced.

Table \ref{tab:noisedim_hidden} shows the results of hypothesis testing with 100 generated samples, using the same permutation based two sample test as in {Section \ref{sec:noise_dist}} for different noise dimensions and hidden sizes.
While enlarging the model size does improve the performance, there is still a significant performance gap between $d_z=4$ and {the smaller values.}
Enlarging the model with $d_z=4$ {or increasing the number of noise dimensions} does not have any meaningful improvement in performance hence we choose the model with $d_z=4$ and hidden size of 64 as the final model.

\begin{table} [!ht]
    \centering
    \begin{tabular}{ccrr}
        \toprule
        $d_z$ & Hidden size & $\mmd$ & p-value \\
        \midrule
        1 & 64 & 74.626 & 0.0 \\
        1 & 256 & 1.724 & 0.087 \\
        2 & 64 & 35.522 & 0.0 \\
        2 & 256 & 4.351 & 0.008 \\
        \textbf{4} & \textbf{64} & \textbf{-0.578}\tablefootnote{$\mmd$ is an unbiased estimate of a random variable that is greater than or equal to zero. Therefore, if the true value is zero, some of the sample values will likely be negative.} & \textbf{0.628} \\
        4 & 256 & 0.291 & 0.345 \\
        6 & 64 & 1.444 & 0.127 \\
        8 & 64 & -0.443 & 0.577 \\
        \bottomrule
    \end{tabular}
    \caption{Performance of different noise dimensions and hidden sizes. The $\mmd$ and p-value are calculated using 10,000 permutations.}
    \label{tab:noisedim_hidden}
\end{table}

Post-training analysis of this model shows that one of the four noise dimensions has a dominant effect on the generated sequence.
Figure \ref{fig:noise_dim_impact} shows the effect of setting all but one of the noise dimensions to zero when generating new sequences with the trained generator.
The blue lines are the log returns of the sequences generated with no alterations to the noise while the orange lines are the log returns of the sequences generated with all but one of the noise dimensions set to zero.
Figure \ref{fig:hist_error} shows the histogram of the \textit{difference} in log returns between the sequence generated with unaltered noise and the sequence generated with all but the one noise dimension set to zero.
Both figures show that even with all but the 3rd noise dimension set to zero, we still see the generated sequence of log returns very close to the original sequence.
However, even though the contribution of the non-dominant noise dimensions looks marginal, we cannot omit them and reduce to a one-dimensional model without worsening the $\mmd$ drastically.

\begin{figure} [!ht]
    \centering
    \includegraphics[scale=0.4]{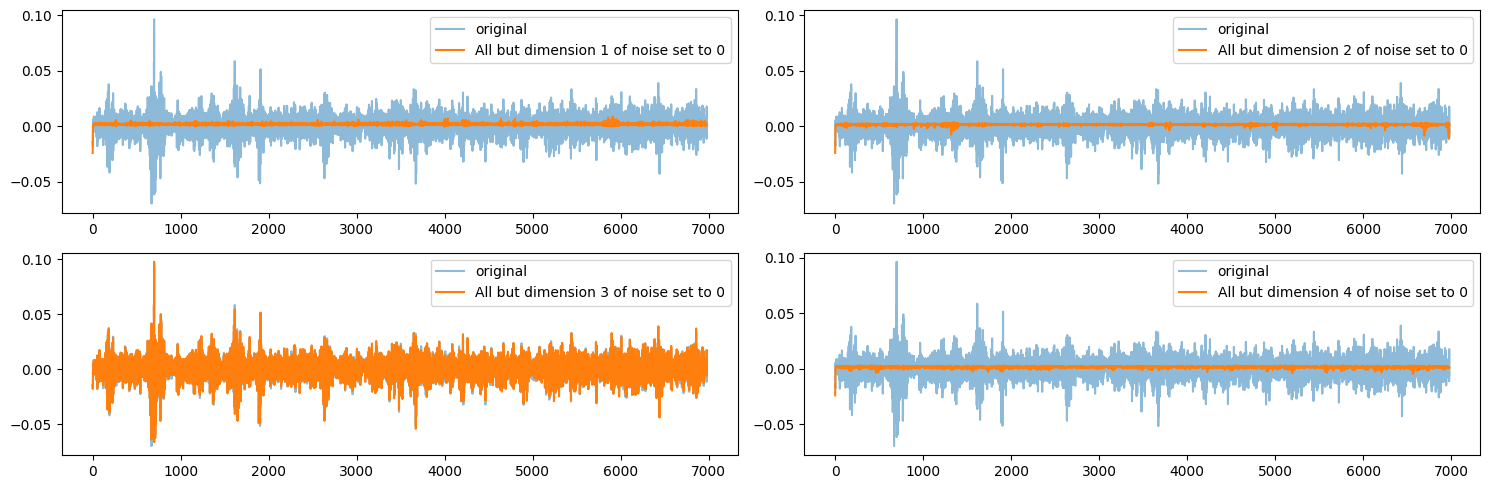}
    \caption{Impact of noise dimension on generated log returns sequence. For each orange plot, we set all but one noise dimension to zero when using the model to output the log return. The blue plots are the same across all four plots which is produced with the unaltered noise. The y-axes are the returns level and the x-axes are the time steps.}
    \label{fig:noise_dim_impact}
\end{figure}

\begin{figure} [!ht]
    \centering
    \subfigure[Dim 3 vs Dim 1]{
        \includegraphics[scale=0.34]{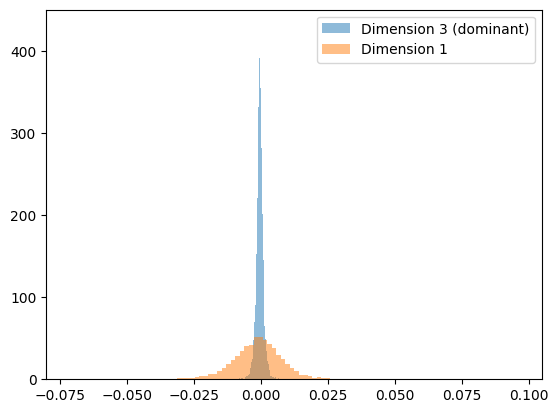}
        \label{fig:hist_3_1}
    }
    \subfigure[Dim 3 vs Dim 2]{
        \includegraphics[scale=0.34]{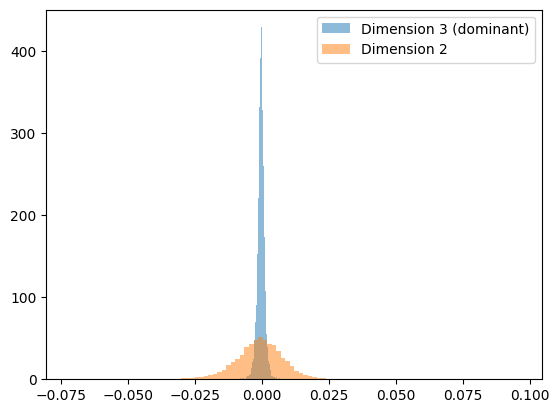}
        \label{fig:hist_3_2}
    }
    \subfigure[Dim 3 vs Dim 4]{
        \includegraphics[scale=0.34]{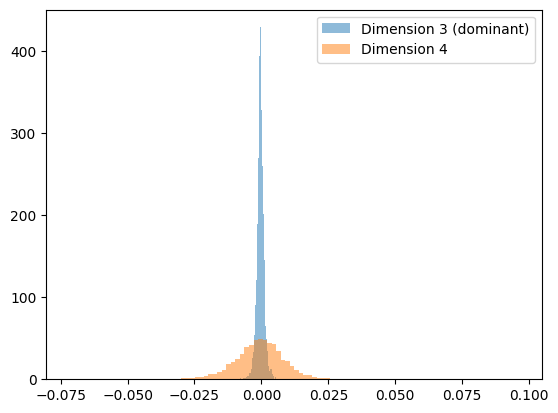}
        \label{fig:hist_3_4}
    }
    \caption{Histogram of the \textit{difference} in log return between the sequence generated with unaltered noise and the sequence generated with all but the one noise dimension set to zero. The blue plot is produced with the dominant noise dimension i.e. the noise dimension that produces the log returns closest to the unaltered noise while the orange plots are of the non-dominant dimensions.}
    \label{fig:hist_error}
\end{figure}

{
We demonstrate the impact on the $\mmd$ by generating 100 samples from the trained model.
For each sample, we generate an alternative sample where we set all but the dominant noise dimension to zero.
We perform the same permutation based two sample tests as in Section \ref{sec:noise_dist} to compare the two sets of generated sequences with the original S\&P 500 sequences.
It is clear from Table~\ref{tab:noise_dim_impact} that despite the seemingly small differences in the generated log returns when all but the dominant noise dimension is set to zero, it has a significant impact on the $\mmd$ measured against the S\&P 500 samples.
}

\begin{table} [!ht]
    \centering
    {
    \begin{tabular}{lrr}
        \toprule
        Sample Type & $\mmd$ & p-value \\
        \midrule
        Unaltered noise & -0.578 & 0.628 \\
        Dominant dim & 12.239 & 0.0 \\
        \bottomrule
    \end{tabular}
    }
    \caption[Impact of noise on $\mmd$]{Impact of noise on the $\mmd$ with 10,000 permutations. The unaltered noise is the noise used to train the model while the dominant dim is the noise with all but the dominant noise dimension set to zero.}
    \label{tab:noise_dim_impact}
\end{table}

\subsection{Ablation Study} \label{sec:ablation}

We conduct an ablation study on the importance of the different inputs to the model.
Specifically, we try removing the two non-stochastic inputs which are the return of the previous time step, $r_{t_{i-1}}$ and the time delta to the next time step, $\Delta t_i$.
In addition to the base model, we train 3 additional models with the following modifications to the inputs:
\begin{enumerate}
    \item Time delta to the next time step i.e. $\Delta t_i$ is omitted.
    \item The return of the previous time step i.e. $r_{t_{i-1}}$ is omitted.
    \item Both the return of the previous time step and time delta to the next time step are omitted.
\end{enumerate}
We perform the same permutation based two sample tests as in {Section \ref{sec:noise_dist}} to compare the generated sequences with the original S\&P 500 sequences.
All sequences are generated with the same noise input and the tests are conducted with the same random seed.
Table \ref{tab:ablation_mmd} shows the $\mmd$ and p-value of the two sample tests.
While it is no surprise that removing $r_{t_{i-1}}$ has a significant adverse effect on the performance showcasing the importance of $r_{t_{i-1}}$ for the generation of the time series, it is interesting to note that removing $\Delta t_i$ also has a large adverse effect.
This suggests that there are important time scale related features in the data that the model is able to learn and exploit.

\begin{table} [!ht]
    \centering
    \begin{tabular}{lrr}
        \toprule
        Model & $\mmd$ & p-value \\
        \midrule
        Base model & -0.578 & 0.628 \\
        No $\Delta t_i$ & 2.991 & 0.026 \\
        No $r_{t_{i-1}}$ & 12.124 & 0.0 \\
        No $r_{t_{i-1}}$ and no $\Delta t_i$ & 620.229 & 0.0 \\
        \bottomrule
    \end{tabular}
    \caption{Ablation study on the importance of the inputs to the LSTM with 10,000 permutations}
    \label{tab:ablation_mmd}
\end{table}

Next, we investigate the effect of conditioning the LSTM cell state and hidden state with historical data as described in Section \ref{sec:conditioning}.
We train two sets of models that differ only by whether the initial cell state and hidden state of the LSTM is conditioned with historical data or initialised to zero during training.
As the random seeds were set to the same value, the specific sequences sampled to train the models and the initialisation of the weights are the same.
For each model, we generate 100 samples with and without conditioning the trained model with historical data i.e. there are 4 sets of samples.
All sequences are generated with the same noise input and the tests are conducted with the same random seed.
Table \ref{tab:conditioning_mmd} shows the $\mmd$ and p-value of the two sample tests.
By comparing the $\mmd$ scores, which increase without the procedure and by a decreasing p-value, we observe that conditioning the LSTM with historical data has a significant postive effect on the performance.

\begin{table} [!ht]
    \centering
    \begin{tabular}{lrr}
        \toprule
        Model & $\mmd$ & p-value \\
        \midrule
        Conditionally trained and generated & -0.578 & 0.628 \\
        Conditionally trained but non-conditionally generated & 46.076 & 0.0 \\
        Non-conditionally trained but conditionally generated & 2.296 & 0.052 \\
        Non-conditionally trained and generated & 12.587 & 0.0 \\
        \bottomrule
    \end{tabular}
    \caption{Effect of conditioning the LSTM with historical data on $\mmd$ with 10,000 permutations}
    \label{tab:conditioning_mmd}
\end{table}

\subsection{Sensitivity Analysis} \label{sec:sensitivity}

We perform a sensitivity analysis by perturbing the inputs to the generative model and observing the effect on the output.
From Section \ref{sec:hyperparameters}, we know that there is a dominant noise dimension that has an outsized impact on the generated sequences.
Therefore, we pick this particular dominant dimension to perturb when it comes to the noise input with all other dimensions set to zero.

We studied sample sizes of 593,000 and 588,000, then rounded the extreme values to set the range of perturbation to $[-7.5,7.5]$ and $[-0.15,0.15]$ for the noise and returns respectively.
Note that there were extreme outliers in the samples as 99.9\% of the samples were within the range of $[-2.88,2.91]$ and $[-0.0543,0.0436]$ for the noise and returns respectively.
Although we know that $\Delta t_i$ has an impact on the performance of the model from Section \ref{sec:ablation}, the sensitivity is extremely low hence is omitted.

We chose three periods of the S\&P 500 data with different market environments to condition the LSTM.
The first period from 23 Mar to 4 Jun 2007 was upward trending, the second period from 18 Aug to 27 Oct 2008 was downward trending and the last period from 31 Mar to 10 Jun 2015 was sideways trending.
In addition, we also experiment with no conditioning of the LSTM cell state and hidden state i.e. initialised to zero.

Figure \ref{fig:noise_prev_output} shows the sensitivity of the output (log return) with respect to the noise and the previous return.
For clarity, we only show the figure for the period from 18 Aug to 27 Oct 2008\footnote{\label{note:fig}Omitted figures can be found at \url{https://github.com/Anonymous1701/mmd-repo/tree/main/figures}}.
More importantly, it is apparent that the noise has a much larger impact on the output than the previous return.
As it is difficult to see the relationship between the output and the previous return from Figure \ref{fig:noise_prev_output}, we plot out the cross sections at fixed noise levels in Figure \ref{fig:sen_autocorr}.
Interestingly, the sensitivity to the previous return is larger at higher absolute noise levels with more positive returns resulting in larger absolute returns and vice versa.

\begin{figure}
    \centering
    \subfigure[Noise vs previous return vs output]{
        \includegraphics[scale=0.35]{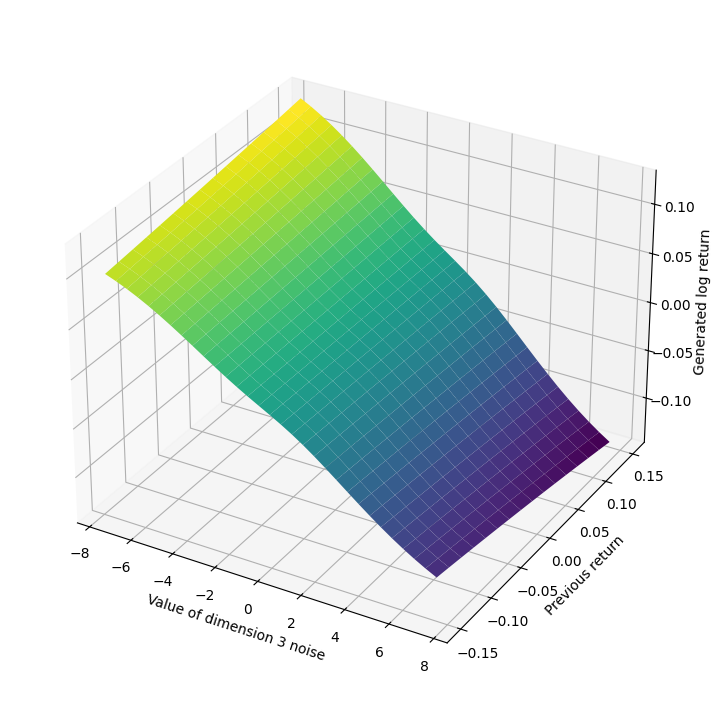}
        \label{fig:noise_prev_output}
    }
    \subfigure[Previous return vs output]{
        \includegraphics[scale=0.45]{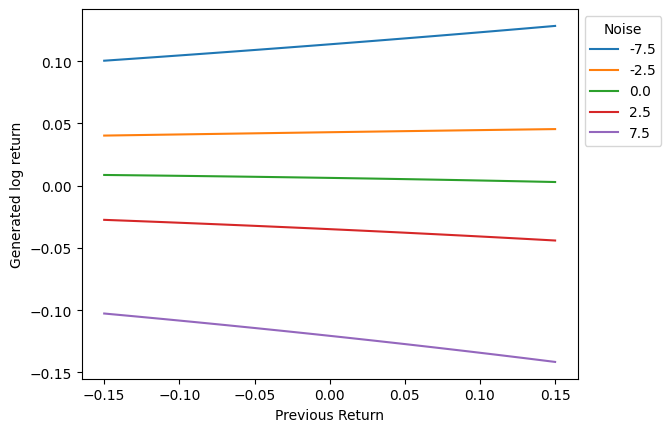}
        \label{fig:sen_autocorr}
    }
    \caption{Left plot shows sensitivity of output with respect to the noise and the previous return. Right plot shows the sensitivity of the output with respect to the previous return with the noise at a fixed value for each line.}
\end{figure}

Next, we investigate the effect of noise after different conditioning periods.
Figure \ref{fig:sen_noise_hist} shows the sensitivity of the output with respect to the noise, where the previous return is from the last data point in the historical period.
The change from a bullish to sideways to bearish market environment appears to cause the plot to rotate clockwise slightly which means that volatility will be higher following bearish periods than bullish periods.

As the noise is essentially Gaussian with zero mean, there is equal probability of the noise being positive or negative.
Therefore, if the y-intercept of the plot is positive (negative), it indicates that there is on balance a higher probability of seeing a positive (negative) return.
For all plots generated with conditioning in Figure \ref{fig:sen_noise_hist}, the y-intercept is slightly positive.
The plot generated without conditioning deviates from the other plots significantly which would result in a very different distribution of returns, consistent with the results in Section \ref{sec:ablation}.

Figure \ref{fig:asymmetry} shows the sum of the two log returns generated when the noise is at the positive and negative value of the x-axis.
In other words it shows the skew of the returns at different noise values, where a positive (negative) y-axis value indicates a positive (negative) skew.
For all 3 plots, the x-intercept is beyond the 99.9\% range of values seen in the noise samples, which combined with the earlier observation that there is a higher chance of seeing a positive return, means that for the vast majority of the time, there is a higher chance of a positive drift in the index.
This is consistent with the stylised fact that the mean of the returns is typically positive in the long run.

\begin{figure}
    \centering
    \subfigure[Noise vs output]{
        \includegraphics[scale=0.45]{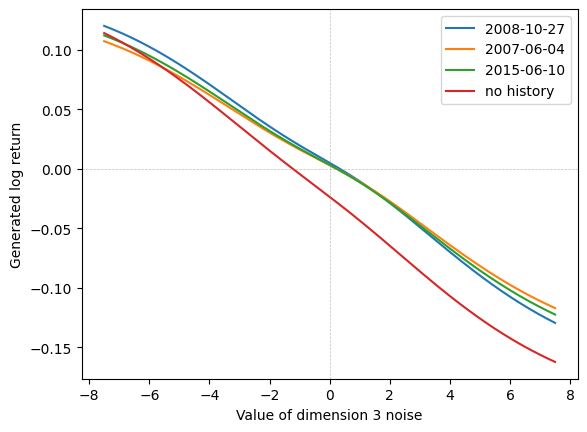}
        \label{fig:sen_noise_hist}
    }
    \subfigure[Asymmetry of the noise vs output]{
        \includegraphics[scale=0.45]{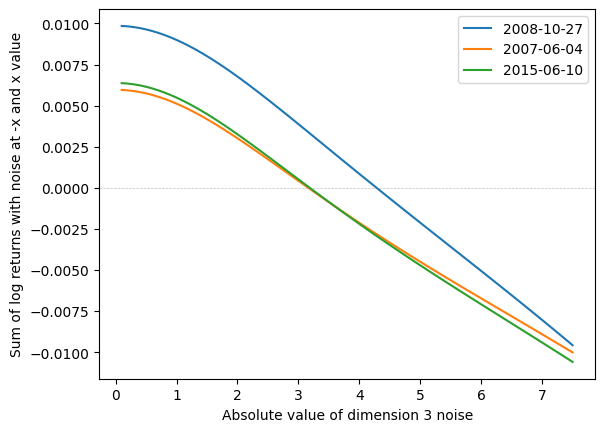}
        \label{fig:asymmetry}
    }
    \caption{Left plot shows sensitivity of output with respect to the noise. Right plot shows the sum of the two log returns generated when the dominant noise value is at the positive and negative value of the x-axis. Each line has a different historical period used for conditioning with the end date of the period indicated in the legend.}
\end{figure}

\subsection{Stylised Facts} \label{sec:stylised_facts}

In this section, we demonstrate that the generative model is able to produce sequences that exhibit most of the behaviours in line with well known stylised facts (see e.g. \cite[Section 2]{cont2001empirical}) of financial time series.
Recall that the training procedure involves sampling segments of the historical S\&P 500 data, which is a sequence of 250 prices.
We compare the statistics of the generated sequences, which have the same length as the segments, with the statistics of the segments and full sequence i.e. from 1995 to 2018.
We generate sequences with the same procedure as in training, with $k=50,n=299$ in the notation of equation \eqref{eq:trunc}, and only use the truncated sequences, $\x_{\Theta,\text{trun}}$ in Algorithm \ref{alg:training}, for the analysis.

{
As a comparison, we train a signature kernel based neural SDE (NSDE), COT-GAN and Conditional Sig-WGAN (C-Sig-WGAN) generative models introduced in \cite{issa2023non}, \cite{xu2020cot}, and \cite{liao2024sig}, respectively using the same data.
We also use the same MA model to generate the noise input to the COT-GAN model, which has the same architecture including the number of noise dimensions.
To ensure a fair comparison, we also use the same noise sequence for both our model and the COT-GAN model to generate the samples of the same length.
The Conditional Sig-WGAN model is set up with the original optimised specification in \cite{liao2024sig} except with the signature truncation order increased to 5 which was the limit of the GPU memory available for the algorithm to run.
For the NSDE model, we use the same hyperparameters as in \cite{issa2023non} for the unconditional setting except for the data related hyperparameters which are adjusted to match our use case.
We also tried varying the $\sigma$ parameter of the RBF kernel in the NSDE model but found that it did not have a significant impact on the results hence we stuck to the original value of 1.0.
We could not use the conditional setting of the NSDE model as the hyperparameters of the orignal model was unable to produce any meaningful results.
We modify the code from the authors' repository available at \url{https://github.com/issaz/sigker-nsdes} for NSDE, \url{https://github.com/tianlinxu312/cot-gan} for COT-GAN and \url{https://github.com/SigCGANs/Conditional-Sig-Wasserstein-GANs} for Conditional Sig-WGAN for our specific use case.
}

\subsubsection{Moments of Returns}

The distribution of asset returns is known to have fatter tails and taller peaks than the Gaussian distribution (see e.g. \cite{cont2001empirical}).
It also exhibits a negative skew.
In the long run, the mean of the returns is typically positive.
We study the first 4 moments of the return distribution for these characteristics\footnote{Traditionally in finance, the average return and volatility of returns are annualised based on business days.
Therefore, all returns are on equal footing and calendar days are not taken into account.
We follow this approach when calculating the statistics. All returns are also calculated as log returns instead of simple returns.
Refer to Appendix \ref{app:returns_stats} for the formulas.}.
To this end, we generate 10,000 sequences using our MMD based generative model and calculate the moments of the log returns.
We do the same by sampling the S\&P 500 segments.
Figure \ref{fig:returns_stats} shows the histogram of the moments across these samples.
The blue bars are the moments of the S\&P 500 segments while the orange bars are the moments of the generated sequences and the red lines are the moments of the full S\&P 500 sequence.

It is clear that there can be significant variations in the statistics of the segments compared to the full sequence of S\&P 500.
Since the training of the generative model is based on the segments, it is not surprising and also desirable that the statistics of the generated sequences follow closer to the distribution of the segments.

\begin{figure} [!ht]
    \centering
    \subfigure[Mean of log returns]{
        \includegraphics[scale=0.5]{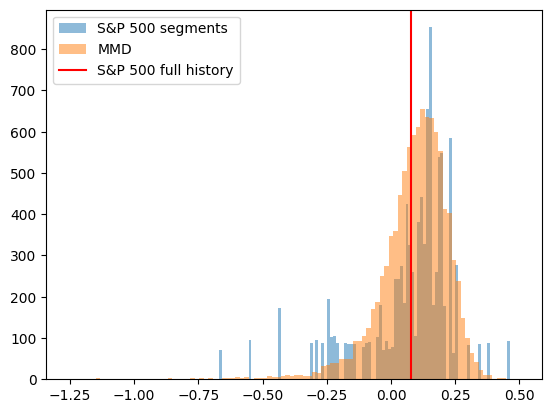}
        \label{fig:mean}
    }
    \subfigure[Standard deviation of log returns]{
        \includegraphics[scale=0.5]{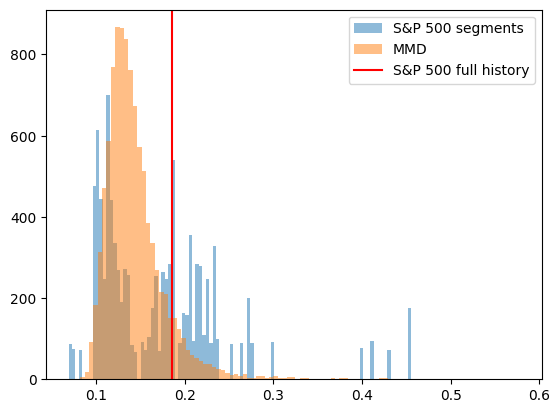}
        \label{fig:std}
    }
    \subfigure[Skew of log returns]{
        \includegraphics[scale=0.5]{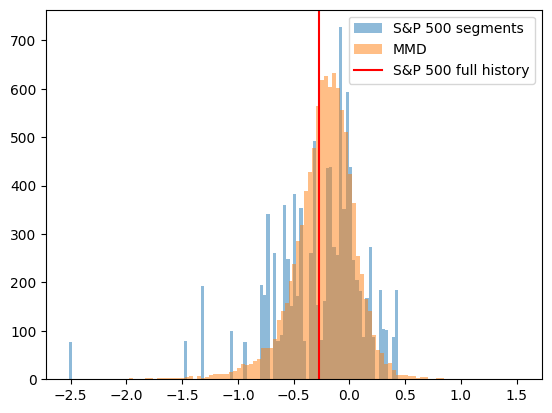}
        \label{fig:skew}
    }
    \subfigure[Kurtosis of log returns]{
        \includegraphics[scale=0.5]{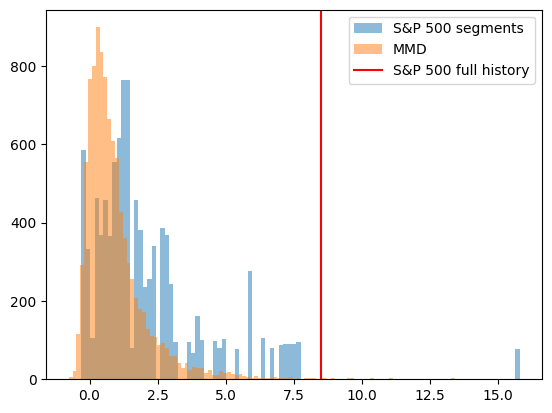}
        \label{fig:kurt}
    }
    \caption{Statistics of log return. Top left plot is the mean of the annualised daily log returns. Top right plot is the standard deviation of the annualised daily log returns. Bottom left plot is the skew of the daily log returns. Bottom right plot is the kurtosis of the daily log returns. The blue bars are the moments of the S\&P 500 segments, the orange bars are the moments of the generated sequences and the red lines are the moments of the full S\&P 500 sequence.}
    \label{fig:returns_stats}
\end{figure}

In addition, we also include the statistics of the moments for both our MMD based generator, the NSDE generator, the COT-GAN generator and the C-Sig-WGAN generator in Table \ref{tab:returns_stats}.
The table shows the mean and standard deviation of the distribution of the moments across 10,000 samples.
In general, the models more or less capture the characteristics of the S\&P 500.
{
One deviation is in the samples from the NSDE and C-Sig-WGAN models which on average produces a slightly positive skew.
The NSDE model also does not produce fatter tails which is evident in the kurtosis of the distribution.
}
The COT-GAN samples have a much wider distribution of the first moment compared to the other models and the S\&P 500 segments.
This is also evident when we look at the distribution of the end points of the generated segments.

\begin{table} [!ht]
    \centering
    {
    \begin{tabular}{lllll}
        \toprule
        Return Type & Ann. returns & Volatility & Skew & Kurtosis \\
        \midrule
        S\&P 500 full history & 0.077 / - & 0.185 / - & -0.271 / - & 8.495 / - \\
        S\&P 500 segments & 0.069 / 0.181 & 0.172 / 0.075 & -0.270 / 0.415 & 2.129 / 2.267 \\
        MMD & 0.091 / 0.135 & 0.143 / 0.033 & -0.222 / 0.278 & 0.955 / 1.206 \\
        NSDE & 0.069 / 0.184 & 0.149 / 0.026 & 0.046 / 0.168 & 0.112 / 0.385 \\
        COT-GAN & 0.086 / 0.340 & 0.240 / 0.045 & -0.130 / 0.296 & 0.663 / 0.977 \\
        C-Sig-WGAN & 0.087 / 0.155 & 0.175 / 0.038 & 0.013 / 0.665 & 5.050 / 4.544 \\
        \bottomrule
    \end{tabular}
    }
    \caption{Statistics of the first 4 moments of the log returns. The first value is the mean and the second value is the standard deviation of the distribution of the moments across 10,000 samples.}
    \label{tab:returns_stats}
\end{table}

Figure \ref{fig:end_hist} shows the density plot of the generated sequences' end points with the initial points normalised to 1.
Note that we truncated the x-axis for better visualisation as 35 of the COT-GAN samples had end points scattered between 2.5 and 5.6.
The end points of the MMD generated sequences are more aligned with the S\&P 500 segments than the COT-GAN generated sequences.
The presence of more extreme returns explains the higher 2nd moment and lower 4th moment of the COT-GAN generated sequences due to a fatter middle.
The extreme values of the COT-GAN generated sequences are overly pronounced and are not a realistic reflection of the training data.
For the {NSDE} and C-Sig-WGAN generated sequences, it is closer to the training data but the MMD generated sequences have a better overall fit.

\begin{figure} [!ht]
    \centering
    \subfigure[MMD]{
        \includegraphics[scale=0.5]{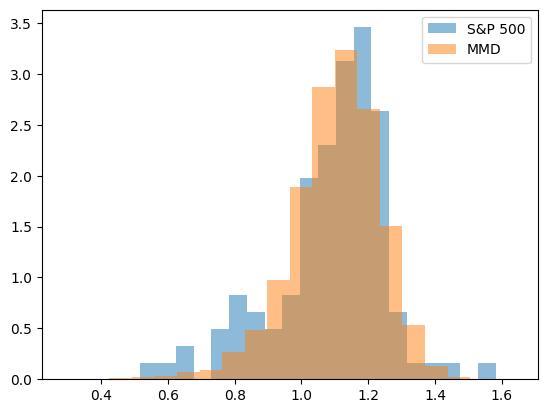}
    }
    \subfigure[NSDE]{
        \includegraphics[scale=0.5]{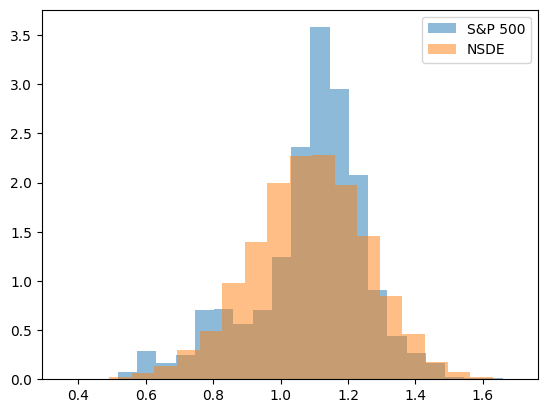}
    }
    \subfigure[COT-GAN]{
        \includegraphics[scale=0.5]{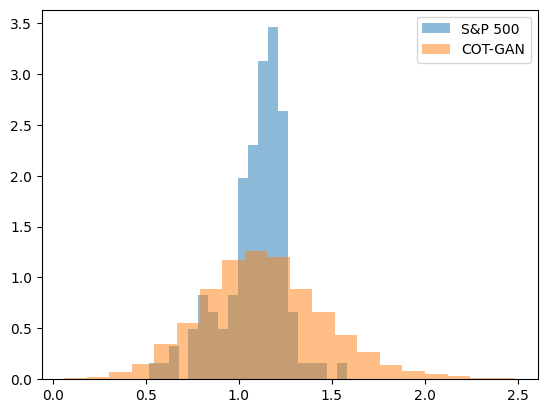}
    }
    \subfigure[C-Sig-WGAN]{
        \includegraphics[scale=0.5]{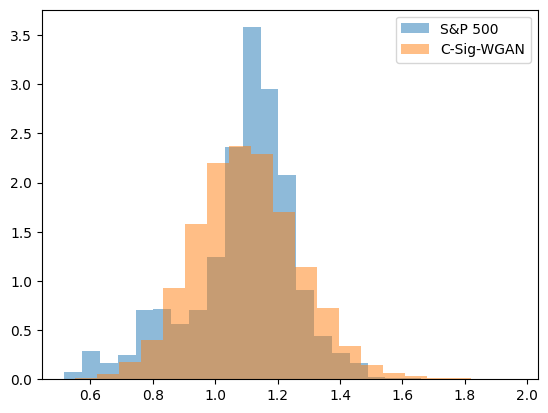}
    }
    \caption{Density plot of the end points of the generated sequences. Initial values are normalised to start from 1.0 for all sequences. The x-axis is truncated for COT-GAN (due to more extreme vales). The S\&P 500 histogram is constructed based on the training data construction method which are the same for MMD, NSDE and COT-GAN but slightly different for C-Sig-WGAN which has shorter lengths.}
    \label{fig:end_hist}
\end{figure}

\subsubsection{Autocorrelation in Returns}

Equity market returns are known to lack linear autocorrelation (see e.g. \cite{cont2001empirical}) but exhibit certain types of nonlinear autocorrelation.
Figure \ref{fig:returns_acf} shows the linear and non-linear autocorrelation of the log returns.
The blue and orange lines are the mean of the autocorrelation coefficient across the generated and sampled sequences respectively while the shaded region shows the range of the mean absolute deviation.
The red line is the autocorrelation coefficient of the full S\&P 500 sequence.

In Figure~\ref{fig:returns_acf}, the statistics are consistent across the 3 types of sequences where it shows a lack of linear autocorrelation.
We omit the comparison for linear autocorrelation of the returns as the NSDE, COT-GAN and C-Sig-WGAN generators produce similar results.\footnote{See footnote \ref{note:fig}}.

\begin{figure} [!ht]
    \centering
    \includegraphics[scale=0.6]{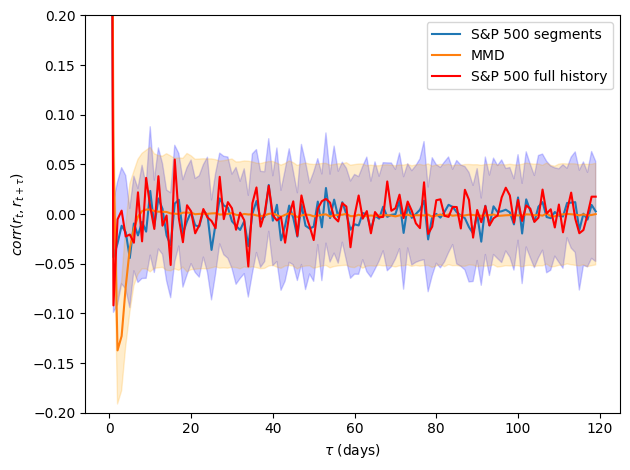}
    \caption[Linear autocorrelation of log returns]{Linear autocorrelations of log returns. The blue and orange lines are the mean of the coefficient across generated sequences, while the shaded region covers the mean absolute deviation. The red line is the coefficient of the full S\&P 500 sequence.}
    \label{fig:returns_acf}
\end{figure}

However, returns are certainly not independent due to the phenomenon of volatility clustering.
There are different manifestations of volatility clustering, and one of them is the autocorrelation of the squared returns.
{
Figure~\ref{fig:r_sq_comp} shows the autocorrelation of the squared returns for all generated sequences.
While both the S\&P 500 segments and full sequence exhibit positive and slow decaying autocorrelation with respect to an increasing time lag, the full sequence shows a more pronounced effect.
The MMD and COT-GAN generated sequences are a closer match to the S\&P 500 segments.
The C-Sig-WGAN generated sequences exhibit a much faster and smoother decay in the autocorrelation of the squared returns.
The NSDE generated sequences do not capture the autocorrelation of the squared returns well, with an almost flat autocorrelation curve.
}

\begin{figure} [!ht]
    \centering
    \subfigure[MMD]{
        \includegraphics[scale=0.44]{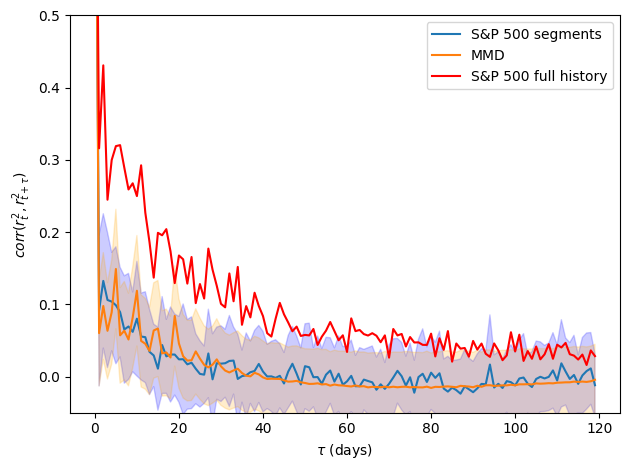}
        \label{fig:mmd_r_sq}
    }
    \subfigure[NSDE]{
        \includegraphics[scale=0.44]{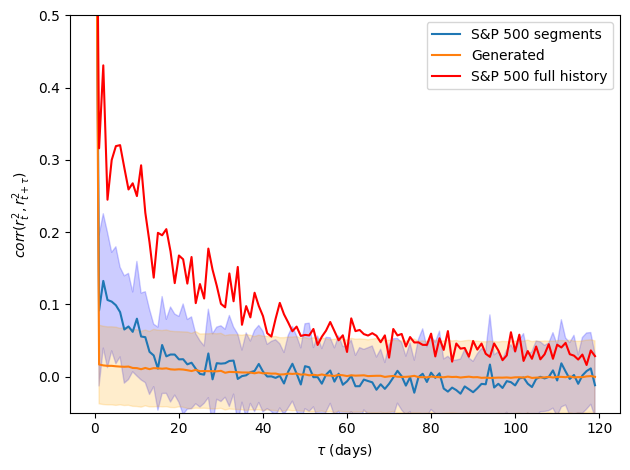}
        \label{fig:nsde_r_sq}
    }
    \subfigure[COT-GAN]{
        \includegraphics[scale=0.44]{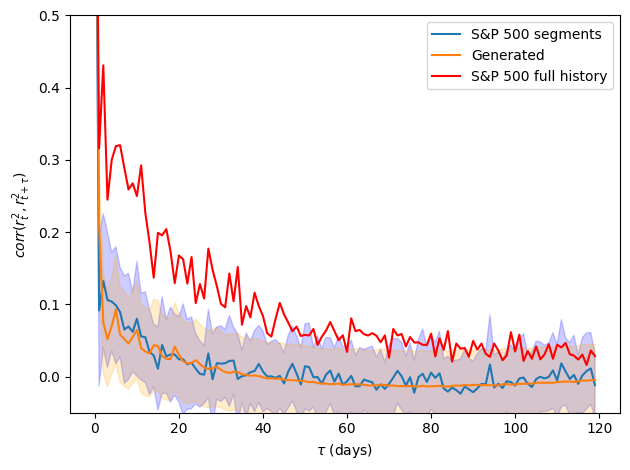}
        \label{fig:cotgan_r_sq}
    }
    \subfigure[C-Sig-WGAN]{
        \includegraphics[scale=0.44]{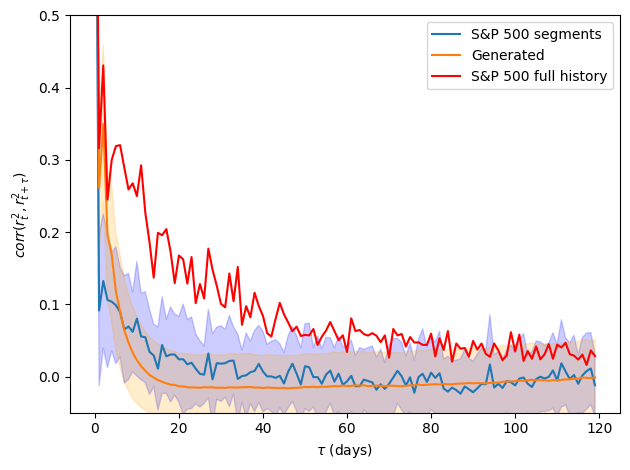}
        \label{fig:csigwgan_r_sq}
    }
    \caption[Autocorrelation of squared returns for all generated sequences]{Autocorrelation of squared returns for all generated sequences. Lines and shaded regions are the same as in Figure \ref{fig:returns_acf}.}
    \label{fig:r_sq_comp}
\end{figure}

\subsubsection{Gain/loss Asymmetry and Leverage Effect}

The negative skew in the observed returns can also be viewed through different lenses.
The gain/loss asymmetry is the phenomenon that there are more large negative returns observed than large positive returns (see e.g.\cite{cont2001empirical}).
Figure \ref{fig:gain_loss} shows the expected ratio of positive returns (y-axis) given that the absolute value of the return exceeds a certain threshold (x-axis).
For the threshold, we only use values up to 0.0237 which corresponds to the 95th percentile of the returns found in S\&P 500 since there are limited samples beyond this point.
On this measure, the MMD generated sequences are more aligned to the S\&P 500 segments displaying a similar level of asymmetry.
{
However, both the NSDE and COT-GAN generated sequences have a return distribution that is more symmetric.
}
For the C-Sig-WGAN generated sequences, the asymmetry more or less follows the same pattern as the S\&P 500 but is much more pronounced until it gets closer to the 95th percentile.

The leverage effect is the observation that the volatility of the returns is negatively correlated with the returns (see e.g.\cite{cont2001empirical}).
We can quantify this by calculating the correlation between the returns and the squared returns in the subsequent days as shown in Figure \ref{fig:leverage_effect}.
On this measure, we see that the MMD generated sequences only show a weak negative correlation in the immediate subsequent day before it quickly decays towards zero.
This is in contrast to the S\&P 500 segments and full sequence which shows a more pronounced negative correlation that is slower to decay.
The COT-GAN generated sequences show a similar pattern to the MMD generated sequences while the C-Sig-WGAN exhibits a pronounced negative correlation in the immediate subsequent day before it quickly decays towards zero.
{
The NSDE generated sequences has the opposite effect with a positive correlation in the immediate subsequent day before it decays towards zero.
}

\begin{figure} [!ht]
    \centering
    \includegraphics[scale=0.5]{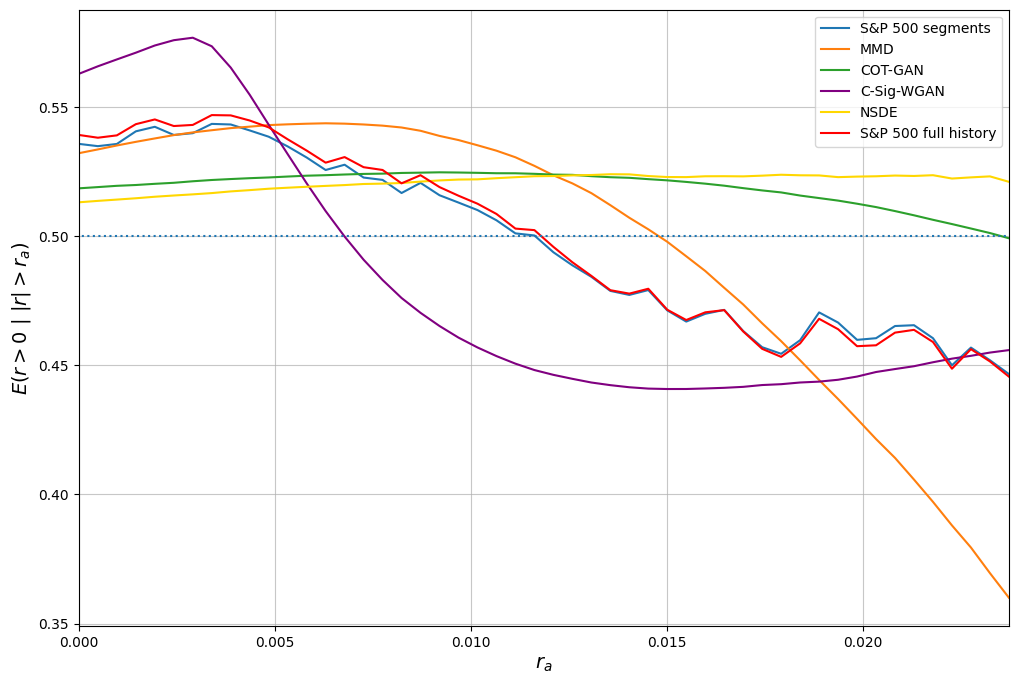}
    \caption[Gain/loss asymmetry]{The gain/loss asymmetry of the daily log returns. The y-axis is the expected ratio of positive returns given that the return exceeds a certain threshold on the x-axis.}
    \label{fig:gain_loss}
\end{figure}

\begin{figure} [!ht]
    \centering
    \subfigure[MMD]{
        \includegraphics[scale=0.42]{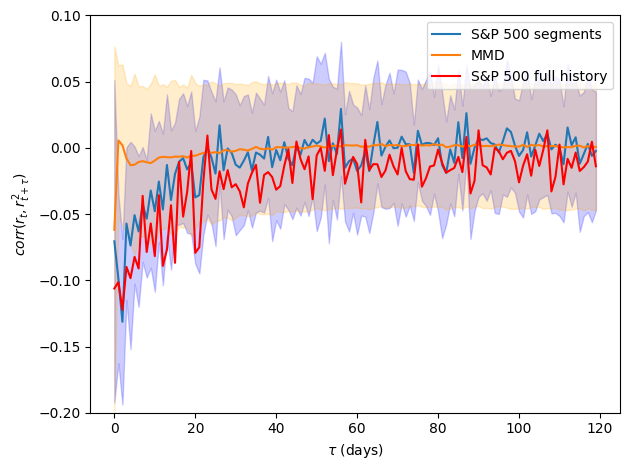}
        \label{fig:leverage_effect}
    }
    \subfigure[NSDE]{
        \includegraphics[scale=0.42]{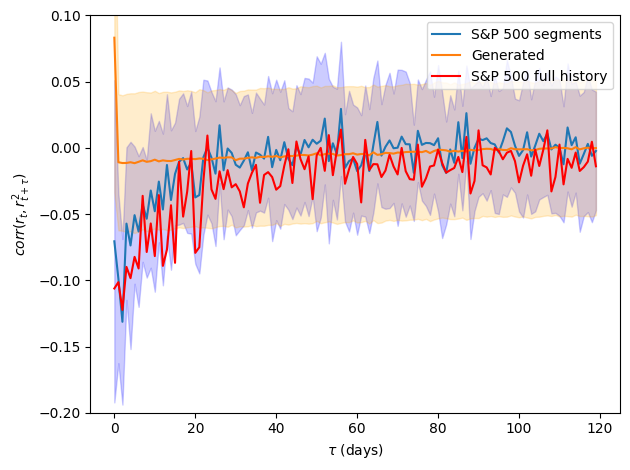}
        \label{fig:mmd_nsde_lev}
    }
    \subfigure[COT-GAN]{
        \includegraphics[scale=0.42]{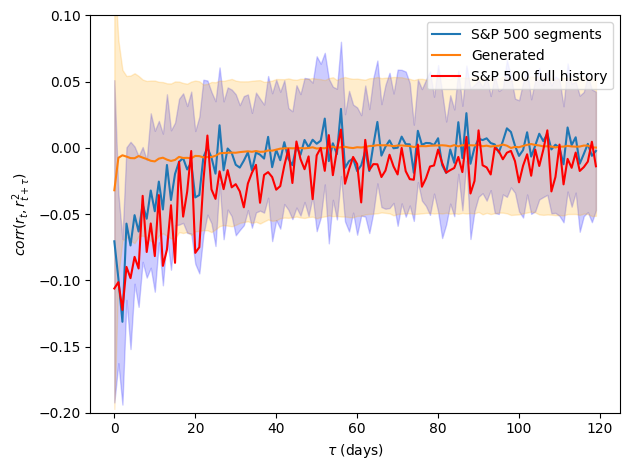}
        \label{fig:mmd_cotgan_lev}
    }
    \subfigure[C-Sig-WGAN]{
        \includegraphics[scale=0.42]{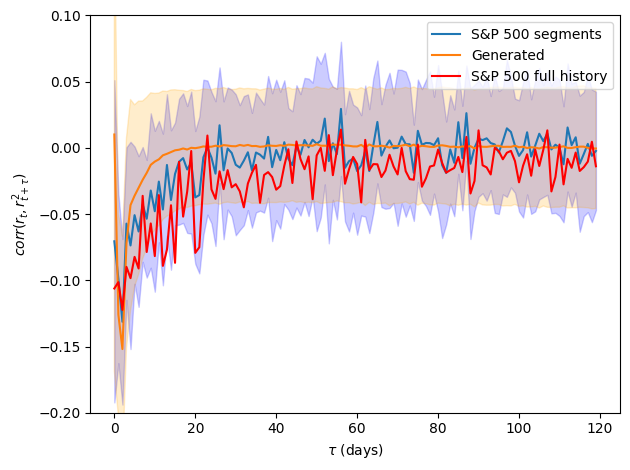}
        \label{fig:mmd_csigwgan_lev}
    }
    \caption[Leverage effect]{The plots show the correlation between the daily log returns and the squared returns of the subsequent $\tau$ days with lines and shaded regions being the same as in Figure \ref{fig:returns_acf}.}
\end{figure}

\subsubsection{Distance Based Comparison} \label{sec:distance_based_comparison}

{
Finally, we compare the generated sequences using two distance-based measures for time series distributions.
First, we use $\mmd$ to perform the permutation tests as in Section~\ref{sec:noise}, but using the \emph{untruncated} signature based on \cite{salvi2021signature} for the signature kernel.
This particular signature kernel is computed by numerically solving a related PDE and is computationally expensive.
We choose a resolution for the grid such that the results do not change with further refinement.
Second, we use the sliced adapted Wasserstein distance (SAW), also used in \cite{acciaio2024time}, which is computationally less expensive than the adapted Wasserstein distance.
}

{
There are 114 S\&P 500 segments available and we extract corresponding sequences with the same time period to calculate both distances.
Table~\ref{tab:distance_comparison} shows the results of the distance-based comparison.
The MMD generated sequences perform the best for both distance measures, with the C-Sig-WGAN generated sequences coming second on both measures.
Both the NSDE and COT-GAN generated sequences perform poorly on both measures.
}

\begin{table}
    \centering
    {
    \begin{tabular}{lrrrr}
        \toprule
        Model & $\mmd$ & p-value & SAW mean & SAW std \\
        \midrule
        MMD & \textbf{0.479} & \textbf{0.11} & \textbf{0.133} & 0.010 \\
        NSDE & 3.561 & 0.0 & 0.153 & 0.010 \\
        COT-GAN & 8.967 & 0.0 & 0.285 & 0.020 \\
        C-Sig-WGAN & 0.762 & 0.05 & 0.137 & \textbf{0.007} \\
        \bottomrule
    \end{tabular}
    }
    \caption[Distance based comparison of the generated sequences with the S\&P 500 segments]{Distance based comparison of the generated sequences with the S\&P 500 segments using $\mmd$ based permutation test with a an \emph{untruncated} signature kernel and sliced adpated Wasserstein distance (SAW).}
    \label{tab:distance_comparison}
\end{table}

\subsection{Portfolio Management using Reinforcement Learning} \label{sec:portfolio_management}

One of the potential applications of a financial time series generative model is portfolio management.
As the amount of data available is relatively small for machine learning standards, a generative model that captures the essential characteristics of the financial time series can be used to generate additional data for training an agent.
To do this, we train a reinforcement learning (RL) agent to manage between cash and the S\&P 500 index based training data from generated sequences.
The environment is set up with zero interest rates, a constraint of zero leverage\footnote{As the S\&P 500 is generally upward trending, allowing leverage made it easier for the agent to outperform the index based on initial experiments, hence this leverage constraint was added to increase the challenge.}, and 5 basis points of proportional transaction cost on the notional amount traded.
The agent is allowed to short the S\&P 500 index up to 100\% of the portfolio value without considering borrowing costs.
Such a setting is not realistic but allows us to focus on whether the agent is able to learn a strategy that exploits the characteristics captured by the generative model.

\begin{figure}
    \centering
    \includegraphics[scale=0.58]{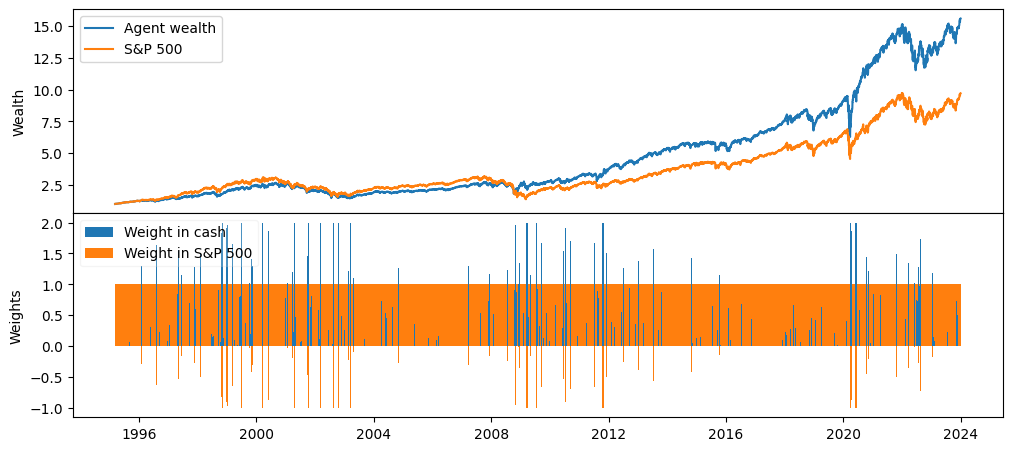}
    \caption{Performance of RL agent on S\&P 500 index on the full period of data. The baseline is 100\% long on the index. Wealth is normalised to 1 at the start.}
    \label{fig:full_perf}
\end{figure}

The agent is trained using the Proximal Policy Optimisation (PPO) algorithm introduced in \cite{schulman2017proximal} with settings based on best practices from previous work in \cite{lu2023evaluation}.
We adopt the same approach to generate the data as for the training of the generative model.
In other words, we do not generate a fixed dataset and consistently update the sequences used for training the RL agent.
Note that although the initial historical portion of the sequence (used to condition the LSTM cells' state and hidden state) is used as the initial observation state for the agent, all returns on the portfolio that are based on the agent's actions during training are calculated on generated sequences only.

We generate longer sequences \footnote{Using the sequence length in training with $n=299$ resulted in the agent learning a constant 100\% long the index strategy hence the switch to a longer length of $n=1299$.} and observe that the agent learns active trading strategies with sporadic periods where she will short or reduce the position in the S\&P 500 index.
Figure \ref{fig:full_perf} shows the performance of the agent on the historical data of S\&P 500 index including the test period from 19 Sep 2018 to 28 Dec 2023 while Figure \ref{fig:test_perf} zooms in on the test period of the data.
Table \ref{tab:perf} shows the annualised returns, volatility, Sharpe ratio, and maximum drawdown of the agent and the S\&P 500 during the full period and the test period.
The agent is able to outperform the S\&P 500 both over the full period and the test period with higher returns, lower volatility and hence higher Sharpe ratio.
The outperformance occurs as the agent learns to short the S\&P 500 index at the correct times, that is, anticipates drops in the index.
Although this is still a relatively small sample that makes it difficult to draw strong conclusions, the results indicate potential for this approach.

\begin{table}
    \centering
    \begin{tabular}{lrrrr}
        \toprule
        & Ann. return & Volatility & Sharpe ratio \tablefootnote{Calculated as the annualised log return divided by volatility with interest rate assumed to be zero.} & Max drawdown \tablefootnote{Calculated based on log returns.} \\
        \midrule
        S\&P 500 (full period) & 0.079 & 0.191 & 0.413 & -0.839 \\
        RL agent (full period) & 0.095 & 0.180 & 0.532 & -0.621 \\
        S\&P 500 (test period) & 0.094 & 0.215 & 0.440 & -0.414 \\
        RL agent (test period) & 0.118 & 0.205 & 0.573 & -0.414 \\
        \bottomrule
    \end{tabular}
    \caption{Performance of RL agent on S\&P 500 index. The full period is from 1 Jan 1995 to 28 Dec 2023 while the test period is from 19 Sep 2018 to 28 Dec 2023. The annualised returns and volatility are calculated as in Table \ref{tab:returns_stats}.}
    \label{tab:perf}
\end{table}

\begin{figure}
    \centering
    \includegraphics[scale=0.58]{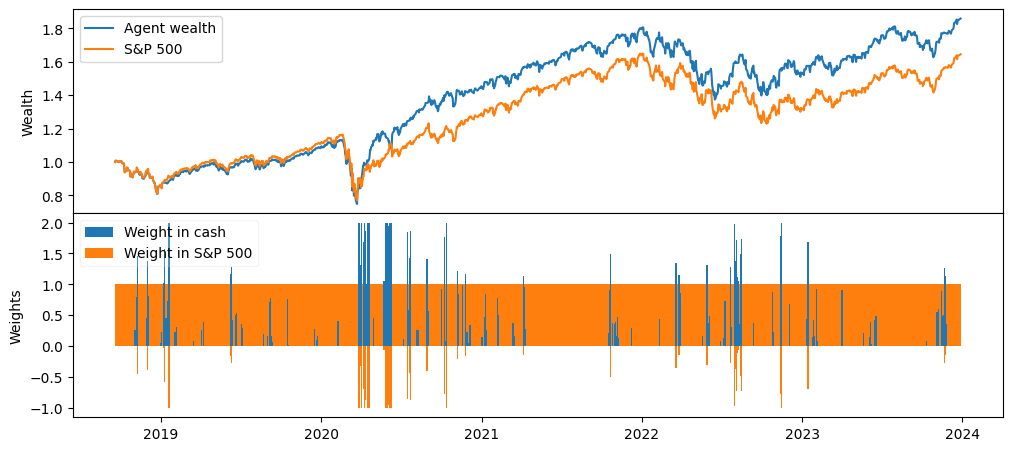}
    \caption{Performance of RL agent during the out of sample period. Normalisation and baseline are the same as in Figure \ref{fig:full_perf}.}
    \label{fig:test_perf}
\end{figure}

It is noteworthy that the agent's performance continued to improve (measured on training data) up to at least 10m steps of training, which corresponds to almost 40,000 years of data.
Although further fine-tuning of training hyperparameters could reduce the amount of data required to achieve the same performance, it is unlikely that the agent can achieve the same performance using only the historical data.
This emphasises the importance having a generative model to provide additional data, with sufficient fidelity to the distribution of the historical data, for training the RL agent.

\subsection{Adding Robustness}

We propose a method to improve the robustness of using the generative model to train the RL agent by altering the noise distribution.
Specifically, we selectively only fit the MA model to sequences that correspond to market downturns, which we define as periods where the market has declined by 30\% or more.
As the data from periods of market downturns are even more limited, it is not possible to train a generative model directly on this data.
Our method is a way to inject characteristics of market downturns into the generated sequences without using large amounts of data.
This allows us to adapt to the generator to distributions changes which likely occur during market downturns.

At the same time, to ensure that we are not too adrift from the true distribution, we perform permutation tests.
The test is performed on selected S\&P 500 segments from market downturns in addition to the complete data set against the generated sequences using altered noise, henceforth called robust noise.
We selected the dot-com bubble burst and the 2008 Global Financial Crisis (GFC) as the market downturns to focus on.
In addition, we use robust noise generated using the average of the parameters of the MA model trained on the dot-com bubble burst and the 2008 GFC periods.
This method of averaging the parameters in a model has been shown to improve the performance of a model in a deep learning context (see e.g. \cite{wortsman2022model}, \cite{rame2022diverse}), hence we experiment with this approach to see if it can improve the robustness of the generative model.
Table \ref{tab:robust_noise_mmd} shows the results of the permutation test.
It is unsurprising that the robust noise generates sequences that do not match the overall distribution of the S\&P 500 segments.
However, when we focus on market downturns, robust noise generates sequences that are closer to the S\&P 500 segments than the base generator which uses noise trained on the full dataset.

\begin{table}
    \centering
    \begin{tabular}{lll}
        \toprule
        & p-value (full dataset) & p-value (market downturns) \\
        \midrule
        Base generator & 0.218 & 0.021 \\
        Dot-com bubble burst generator & 0.000 & 0.068 \\
        2008 GFC generator & 0.101 & 0.200 \\
        Averaged generator & 0.009 & 0.194 \\
        \bottomrule
   \end{tabular}
   \caption{Results of permutation test on the S\&P 500 segments from the full dataset and market downturns against the generated sequences using the robust noise. The base generator has a noise generator trained on the entire dataset while the other generators have noise generators trained on selected periods that encompass the named market downturns with the average generator using the average of the MA model parameters.}
    \label{tab:robust_noise_mmd}
\end{table}

\begin{table}
    \centering
    \begin{tabular}{lrrrrrrrrr}
        \toprule
        & Ann. return & Volatility & Sharpe ratio & Max drawdown \\
        \midrule
        \multicolumn{4}{l}{Dot-com bubble burst from 24 Mar 2000 to 9 Oct 2002} \\
        \midrule
        S\&P 500 & -0.267 & 0.229 & -1.167 & -0.676 \\
        Base generator & -0.196 & 0.214 & -0.918 & -0.601 \\
        Dot-com bubble burst generator & \textbf{0.171} & \textbf{0.172} & \textbf{0.994} & \textbf{-0.149} \\
        2008 GFC generator & -0.061 & 0.187 & -0.328 & -0.305 \\
        Averaged generator & 0.138 & 0.179 & 0.769 & -0.164 \\
        \midrule
        \multicolumn{4}{l}{2008 GFC from 9 Oct 2007 to 20 Nov 2008} \\
        \midrule
        S\&P 500 & -0.650 & 0.362 & -1.803 & -0.732 \\
        Base generator & -0.346 & 0.334 & -1.038 & -0.426 \\
        Dot-com bubble burst generator & \textbf{0.307} & \textbf{0.283} & \textbf{1.089} & \textbf{-0.219} \\
        2008 GFC generator & -0.004 & 0.284 & -0.015 & -0.293 \\
        Averaged generator & 0.273 & 0.294 & 0.933 & -0.262 \\
        \midrule
        \multicolumn{4}{l}{COVID-19 pandemic from 19 Feb 2020 to 23 Mar 2020} \\
        \midrule
        S\&P 500 & -4.351 & 0.792 & -5.732 & -0.414 \\
        Base generator & -4.351 & 0.792 & -5.732 & -0.414 \\
        Dot-com bubble burst generator & \textbf{2.362} & \textbf{0.509} & \textbf{4.840} & \textbf{-0.057} \\
        2008 GFC generator & 0.289 & 0.704 & 0.429 & -0.152 \\
        Averaged generator & 0.835 & 0.714 & 1.221 & -0.158 \\
            \midrule
        \multicolumn{4}{l}{Full dataset from 1 Jan 1995 to 28 Dec 2023} \\
        \midrule
        S\&P 500 & 0.079 & 0.191 & 0.413 & -0.839 \\
        Base generator & \textbf{0.095} & 0.180 & \textbf{0.532} & \textbf{-0.621} \\
        Dot-com bubble burst generator & -0.008 & \textbf{0.151} & -0.053 & -0.962 \\
        2008 GFC generator & 0.027 & 0.156 & 0.174 & -0.654 \\
        Averaged generator & -0.009 & 0.155 & -0.061 & -1.058 \\
        \bottomrule
    \end{tabular}
    \caption{Performance of RL agent trained with different noise distributions on S\&P 500 index during different market downturn periods. The annualised returns, volatility, Sharpe ratio and maximum drawdown are calculated as in Table \ref{tab:perf}. The evaluation is done on the S\&P 500 index during the named market downturn periods indicated.}
    \label{tab:robust_noise_perf}
\end{table}

We repeat the training procedure from Section \ref{sec:portfolio_management} using the robust noise generated sequences.
In particular, we focus on the performance of the RL agent during the market downturns.
Table \ref{tab:robust_noise_perf} shows the performance of the RL agent trained with different noise distributions on the S\&P 500 index during different market downturn periods.
Both RL agents trained using generators with robust noise outperform the base generator during market downturns with the dot-com bubble burst generator showing the best performance.
Although the averaged generator does not outperform the dot-com bubble burst generator, it still performs well compared to the base generator during market downturns.
As expected, these agents underperform the base generator during the full dataset period since the training is tailored to the market downturns by design.
A portfolio manager could actively switch to a RL agent trained with robust noise during market downturns to improve performance during these periods.
Overall, this method shows promise in improving the robustness of the generative model for training the RL agent to adapt to different market environments without requiring large amounts of data.

\section{Conclusion and Future Work} \label{sec:conclusion}

This paper introduces an approach for generating realistic synthetic financial time series data using MMD with a signature kernel.
The use of a moving average (MA) model to vary the variance of the noise input to the generative model aids the performance.
In particular, the model demonstrates the ability to capture key stylized facts observed in financial markets.

The application of the generative model for training a reinforcement learning agent in the task of portfolio management shows promising results, highlighting the potential of our approach for practical applications.
We also proposed a method to improve the robustness of the generative model by selectively fitting the noise distribution to periods of market downturns.
This method shows promise in capturing the characteristics of market downturns without requiring large amounts of data.

Several avenues exist for further exploration and improvement. The noise structure is clearly important for generating realistic sequences, and further research is needed to better understand the impact of different noise distributions on the generated data.
The robustness of the generative model can be further improved by incorporating additional features or data sources to better capture the complexity of financial markets.
For the RL application, one could introduce a Bayesian approach to learn the prevalent regime of the market and switch between different agents trained on different noise distributions or use an ensemble of weighted agents (see e.g. \cite{duembgen2014estimate}).
{
In our experiments, we found that RL tends to find an averaged strategy when mixing datasets from very different market environments.
It requires additional architectural adjustments to produce distinct strategies for different environments (see \cite{lu2023evaluation}).
Further research is needed to see how such adjustments can be generalised.
}
Expanding empirical studies and applications by handling multiple assets and investigating more sophisticated sampling and data preprocessing techniques are also possible directions for future research.

\section*{Acknowledgments}
\noindent
J. Sester is grateful for financial support
 by the NUS Start-Up Grant \emph{Tackling model uncertainty in Finance with machine learning}.

\bibliographystyle{apalike}
\bibliography{MMD_SGM}

\section*{Appendix}
\appendix
\section{Heston Model Experiment} \label{app:heston}
The Heston model is defined by the following stochastic differential equations:
\begin{align*}
    dS_t &= \mu S_t dt + \sqrt{v_t} S_t dW_t^1, \\
    dv_t &= \kappa (\theta - v_t) dt + \sigma \sqrt{v_t} dW_t^2,
\end{align*}
where $S_t$ is the asset price, $v_t$ is the variance of the asset price, $\mu$ is the drift, $\kappa$ is the mean reversion rate, $\theta$ is the long-term variance, $\sigma$ is the volatility of the variance, $W_t^1$ and $W_t^2$ are Brownian motions where $dW_t^1 dW_t^2 = \rho dt$ for some correlation parameter $\rho \in [-1,1]$.

We set the parameters $\mu = 0.2$, $\kappa = 1$, $\theta = 0.25$, $\sigma = 0.7$, $\rho = -0.7$ and the initial variance $v_0 = 0.09$.
Using the Quantlib python library \url{https://quantlib-python-docs.readthedocs.io/en/latest/}, we generate 6,400 samples of length 250 with a time step of 1/252.
The Quantlib implementation uses first order Euler discretisation to simulate trajectories of the Heston model.

The models are trained using a simplified version of Algorithm \ref{alg:training}, where the conditioning process is omitted the LSTM cell state and hidden state and only the time augmentation is used instead of both lead-lag and time augmentation.

To simulate noise, we generate independent paths using the same parameters for the Heston process, except with $\mu=0$, then calculate $z_{t_i} = \frac{\log{S_{t_i}} - \log{S_{t_{i-1}}}}{\sqrt{(t_i-t_{i-1}) v_0}}$ to obtain the noise at time $t_i$.
For the i.i.d.\,standard Gaussian noise, we use the same Quantlib Heston implementation but set $\kappa = \sigma = 1e-9$ and $\theta = v_0$ which effectively results in the noise having a mean of zero and variance of one.
This method of simulating the i.i.d.\,Gaussian noise allows the use of the same random seed and generate noise sequences that are only different in variance so as to provides a fair comparison with the noise that has changing variance.
We use a noise dimension of 2 as the Heston model has two Brownian motions.

The truncation level was set to $m=5$ and the static kernel was a rational quadratic kernel with $\alpha=1$ and $l=0.1$.

\section{Moments of Returns} \label{app:returns_stats}

We briefly present the formulas used to calculate the empirical estimates of the moments of the returns in Section \ref{sec:stylised_facts}.
Let $(r_1, r_2, \ldots, r_n)$ be a sample of log returns we define $\bar{r} := \frac{1}{n} \sum_{i=1}^{n} r_i$ and $m_i := \frac{1}{n} \sum_{j=1}^{n} (r_j - \bar{r})^i$.
We will assume that there are 252 business days in a year.

The first four moments of the log returns are named and calculated as follows:
\begin{enumerate}
    \item The mean is annualised by business day convention which we calculate it as: $\textbf{Ann. returns} = 252 \times \bar{r}$.
    \item Similarly volatility is also annualised by convention and calculated as: $\textbf{Volatility} = \sqrt{252 \times m_2}$.
    \item The skew is calculated as: $\textbf{Skew} = \frac{m_3}{m_2^{3/2}}$.
    \item The kurtosis is calculated as: $\textbf{Kurtosis} = \frac{m_4}{m_2^2} - 3$.
\end{enumerate}

\end{document}